\documentclass[12pt]{article}

\usepackage{amsmath,amssymb,amsfonts}
\usepackage{amsthm}
\usepackage{graphicx}
\usepackage{booktabs}
\usepackage{multirow}
\usepackage{array}
\usepackage{algorithm}
\usepackage{algorithmic}
\usepackage{float}
\usepackage{url}
\usepackage{geometry}
\usepackage{hyperref}
\usepackage{adjustbox}
\usepackage{placeins}

\newtheorem{theorem}{Theorem}
\newtheorem{lemma}{Lemma}
\newtheorem{corollary}{Corollary}
\newtheorem{definition}{Definition}

\newcommand{\Z}{\mathbb{Z}}
\newcommand{\Gk}{G_k}

\newcommand{\dist}{\operatorname{dist}}
\newcommand{\sgn}{\operatorname{sgn}}
\newcommand{\ecc}{\operatorname{ecc}}

\title{Multi-Orientation Edge-Minimum Repair for Non-Redundant Fault-Tolerant Broadcasting in Dense Gaussian Networks}

\author{Bader A. Albader\\
\small Department of Computer Science, Faculty of Science, Kuwait University, Kuwait\\
\small \texttt{albader@cs.ku.edu.kw}}

\date{}

\begin{document}

\maketitle

\begin{abstract}
Dense Gaussian networks are degree-four algebraic interconnection networks with compact diameter and simple modular routing. This paper studies non-redundant one-to-all broadcast repair in the dense Gaussian network generated by $\alpha=k+(k+1)i$. We propose multi-orientation edge-minimum repair (MOEM), which evaluates a constant-size family of Gaussian broadcast-tree orientations, selects a fault-aware orientation, contracts the fault-pruned tree into healthy components, and reconnects those components using external component-crossing repair edges. The resulting structure is a rooted spanning tree of the healthy subgraph, so each healthy node receives the message exactly once and no faulty node is used. We prove that, for a chosen orientation with $c$ fault-pruned components and a connected healthy component graph, the repair step is non-redundant and uses the minimum possible number $c-1$ of external component-repair edges. We also prove that, for every one- or two-fault placement, the MOEM orientation family contains a repair with depth at most $k+2$. The depth proof combines a certificate framework, an explicit four-case off-axis analysis, and a five-component orthogonal-axis certificate. Exhaustive validation for $k=5,\ldots,10$ and large-scale validation through $k=200$ confirm the implementation and show that random two-fault repairs use approximately two external repair edges.
\end{abstract}

\noindent\textbf{Keywords:} Gaussian networks, Cayley graphs, fault-tolerant broadcasting, non-redundant communication, local repair, interconnection networks, Hamiltonian cycles.

\section{Introduction}

One-to-all broadcasting is a fundamental collective communication primitive in parallel and distributed systems. In a non-redundant broadcast, each healthy node receives the message exactly once. This is desirable because it avoids duplicate traffic, reduces contention, and gives a simple correctness condition: the transmission structure must be a rooted spanning tree of the healthy subgraph. The challenge is that a broadcast tree designed for a fault-free topology can be fragmented by a small number of faulty processors.

Algebraic interconnection networks provide attractive settings for structured communication algorithms. Cayley-graph models give symmetry and compact routing descriptions \cite{akers1989}; tori and related networks have long been used to study collective communication and fault tolerance \cite{leighton1992,dally2004,almohammad1999}. Gaussian networks are quotient networks over Gaussian integers. They provide degree-four planar algebraic topologies with strong routing, placement, and Hamiltonian properties \cite{flahive2010,flahive2013,albader2016}. The dense Gaussian family considered here has particularly simple integer labels and a natural diameter-level coordinate ball.

Let
\[
    N=k^2+(k+1)^2
\]
and let $\Gk$ be the dense Gaussian network generated by $\alpha=k+(k+1)i$. The node set is $\Z_N$ and the edges are
\[
    u\sim u\pm k,\qquad u\sim u\pm(k+1)\pmod N.
\]
The map
\begin{equation}
    \phi(x+yi)\equiv kx+(k+1)y\pmod N
    \label{eq:phi}
\end{equation}
identifies the Manhattan ball
\[
    B_k=\{(x,y)\in\Z^2: |x|+|y|\le k\}
\]
with the $N$ nodes of $\Gk$. The ball has exactly $2k^2+2k+1=N$ points. Hence, a source-centered Gaussian broadcast tree can reach all nodes in depth at most $k$ in the fault-free case.

The problem studied in this paper is the following.

\begin{definition}[Non-redundant local broadcast repair]
Given a source $s$ and a fault set $F\subseteq V(\Gk)$ with $s\notin F$, construct a rooted spanning tree of $\Gk-F$ rooted at $s$. The construction should use no faulty node, should give every healthy non-source node exactly one parent, should keep the number of repair edges small, and should keep the broadcast depth close to the fault-free depth $k$.
\end{definition}

A single fixed Gaussian broadcast tree is vulnerable to trunk faults. If a fault lies on a major branch, the tree may split into several components. A depth-safe repair can reconnect the healthy nodes by adding many candidate crossing edges and taking a breadth-first tree, but this may substantially increase the repair count. The central observation of this paper is that one should not repair a poor orientation when another valid orientation may be much less damaged.

This distinction between static placement and post-fault broadcast repair motivates the orientation-selection step in MOEM.

This paper proposes \emph{multi-orientation edge-minimum repair} (MOEM). MOEM constructs a small family of deterministic Gaussian broadcast-tree orientations. For each orientation, it removes the faults, forms the component graph of the resulting forest, and applies an edge-minimum component repair. The final output is the valid candidate with best lexicographic score, primarily minimizing depth and then repair count. In the final data, MOEM avoids fallback entirely for random trials with $k\ge25$.

The main contributions are as follows.
\begin{itemize}
    \item We formulate non-redundant local repair for dense Gaussian one-to-all broadcasting under one and two processor faults.
    \item We introduce MOEM: an eight-orientation Gaussian broadcast family followed by edge-minimum component repair.
    \item We prove that repairing a chosen fault-pruned orientation with $c$ components requires and attains exactly $c-1$ external component-crossing edges.
    \item We prove a $k+2$ repaired-depth theorem for every one- and two-fault placement using coordinate certificates, including an explicit O3 off-axis classification and an O6 orthogonal-axis certificate.
    \item We validate all one- and two-fault placements for $k=5$ through $10$ and large-scale cases through $k=200$, and we identify hard cases showing that the balanced-smaller-first orientation is indispensable within the tested family.
\end{itemize}

This paper focuses on the first nontrivial multi-fault regime, $|F|\le2$. This scope enables a complete coordinate proof of the $k+2$ depth law. The component-repair theorem applies to any selected orientation and any fault count with a connected healthy component graph, while the depth-certificate theorem is proved here for one and two faults; Section~\ref{sec:discussion} sketches the three-fault certificate classes suggested by the proof.

\section{Related Work}

Akers and Krishnamurthy gave a group-theoretic model for symmetric interconnection networks, providing a foundation for algebraic network design \cite{akers1989}. Classical texts cover meshes, tori, hypercubes, and collective communication patterns \cite{leighton1992,dally2004}. AlMohammad and Bose studied non-redundant fault-tolerant communication in toroidal networks \cite{almohammad1999}; their work is conceptually related because it preserves collective communication after faults, but it does not repair a pruned dense Gaussian broadcast tree under an external component-crossing edge metric.

Gaussian and Eisenstein--Jacobi networks were developed as algebraic interconnection topologies with compact routing and useful symmetry. Martinez et al. modeled toroidal networks using Gaussian integers and studied dense Gaussian networks as low-degree on-chip multiprocessor topologies \cite{martinez2008,martinez2006}; related Martinez-group work includes hierarchical Gaussian, EJ/hexagonal, and lattice-derived models \cite{moreto2006,martinez2008ej,camarero2013,camarero2015}. Flahive and Bose studied Gaussian and Eisenstein--Jacobi topology and resource placement \cite{flahive2010,flahive2013}. Dense Gaussian networks also have strong Hamiltonian, routing, and path properties \cite{albader2012,alsaleh2013,albader2016,monakhova2020}. Network-on-chip references motivate compact low-degree graphs \cite{pasricha2008,kliazovich2013,song2020}, but this paper does not claim chip-level latency, bandwidth, or power evaluation.

The present work differs from placement, routing, Hamiltonian-cycle construction, and precomputed tree-diversity mechanisms. MOEM starts with a specific diameter-level broadcast tree after the fault set is known, preserves its healthy components, and minimizes new external component-crossing repair edges. A global BFS rebuild may restore connectivity but can replace $\Theta(N)$ parent edges; precomputed tree-diversity methods optimize path diversity rather than post-fault local repair. Thus fixed-orientation local repair is the nearest same-objective comparator.

\section{Dense Gaussian Network Model}

Let $\Gk=(V,E)$ be the dense Gaussian network with $V=\Z_N$ and $N=k^2+(k+1)^2$. For $u,v\in V$,
\[
    (u,v)\in E \quad\Longleftrightarrow\quad v-u\equiv \pm k \text{ or } \pm(k+1) \pmod N.
\]
Edges of difference $\pm k$ are called $H_1$-type edges, and edges of difference $\pm(k+1)$ are called $H_2$-type edges. In the dense case, $\gcd(k,k+1)=1$, so each generator induces a Hamiltonian cycle.

\subsection{Coordinate Representation}

Equation~\eqref{eq:phi} maps the coordinate ball $B_k$ onto $\Z_N$. Since
\[
    |B_k|=1+2k(k+1)=k^2+(k+1)^2=N,
\]
this representation is bijective. Moving from $(x,y)$ to $(x+1,y)$ changes the label by $k$, and moving from $(x,y)$ to $(x,y+1)$ changes the label by $k+1$. Therefore unit Manhattan moves in $B_k$ correspond exactly to Gaussian graph edges.

\subsection{Fault-Free Oriented Broadcast Trees}

For a fixed orientation $\theta$, define a parent rule that maps each nonzero coordinate $(x,y)$ to a neighbor closer to the origin. The $x$-first tree, for example, uses
\begin{equation}
    p_x(x,y)=
    \begin{cases}
    (x-\sgn(x),y), & x\ne 0,\\
    (x,y-\sgn(y)), & x=0,\;y\ne0.
    \end{cases}
    \label{eq:xfirst}
\end{equation}
Applying $\phi$ to these parent relations gives a spanning tree of $\Gk$ rooted at the source. The depth of coordinate $(x,y)$ is $|x|+|y|\le k$.

\begin{lemma}
For any coordinate-reduction orientation that decreases $|x|+|y|$ by one at every non-source node, the induced parent relation is a spanning tree of $\Gk$ with depth at most $k$.
\end{lemma}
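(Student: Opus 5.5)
We work entirely in the coordinate ball $B_k$ and transfer the result to $\Gk$ through the bijection $\phi$ of \eqref{eq:phi}.

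Fix the source at the origin; by vertex-transitivity of the Cayley graph $\Gk$ (translation $u\mapsto u+s$), this is no loss of generality. A coordinate-reduction orientation assigns to each nonzero $(x,y)\in B_k$ a parent $p(x,y)$ that is a unit Manhattan neighbor with $|p(x,y)|_1=|x|+|y|-1$. Because the norm decreases, $p(x,y)$ again lies in $B_k$, so $p$ is a well-defined map $B_k\setminus\{0\}\to B_k$. By the coordinate representation subsection, each pair $\{(x,y),p(x,y)\}$ maps under $\phi$ to an $H_1$- or $H_2$-type edge of $\Gk$, so the parent relation uses only genuine edges.

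Next we show the relation is a spanning tree. Iterating $p$ from any nonzero point strictly decreases the norm $|x|+|y|$ by exactly one per step. Hence the iteration never revisits a point, so there are no cycles, and it terminates at the unique point of norm $0$, the origin, after exactly $|x|+|y|$ steps. Thus every node has a unique parent path to the root. The directed graph of parent edges therefore has $N-1$ edges on $N$ vertices, with every vertex connected to the root, so it is a spanning tree. Since $\phi$ is a bijection $B_k\to\Z_N$, all $N$ nodes of $\Gk$ are covered. The depth of $(x,y)$ equals $|x|+|y|\le k$.

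The one subtle point is that distinct coordinate edges must map to distinct graph edges. Otherwise two parent edges could coincide in $\Gk$ and the edge count argument would be distorted. This actually follows from $\phi$ being injective on vertices: each tree edge is determined by its child endpoint, and distinct children have distinct labels. The only remaining concern is the degenerate small-$k$ case where $\pm k$ and $\pm(k+1)$ might coincide modulo $N$. For $k\ge1$ we have $N\ge5$ and $2k+1<N$, so these four differences are distinct, and each parent edge is a well-defined single edge. We therefore expect no real obstacle beyond stating this injectivity and the translation reduction carefully.
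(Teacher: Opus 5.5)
Your proof is correct and follows the paper's argument: each parent is an adjacent node one layer closer to the origin, so iterating the parent map reaches the source in exactly $|x|+|y|\le k$ steps, which rules out cycles and yields a spanning tree of depth at most $k$. Your additions (translation to the origin, the $N-1$ edge count, and distinctness of parent edges) are bookkeeping the paper leaves implicit. The small-$k$ check that $\pm k,\pm(k+1)$ are distinct modulo $N$ is not needed, because each parent edge is already identified by its higher-layer child endpoint.
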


\begin{proof}
Each non-source node $(x,y)$ is assigned a parent obtained by changing one nonzero coordinate by one unit toward zero. Hence the parent is adjacent in $\Gk$ and the value $|x|+|y|$ decreases by one. Repeated parent application reaches $(0,0)$ after exactly $|x|+|y|$ steps, so cycles are impossible and all nodes are connected to the source. Since $|x|+|y|\le k$ for all coordinates in $B_k$, the depth is at most $k$.
\end{proof}

MOEM uses eight deterministic orientations: $x$-first, $y$-first, positive-preference variants, negative-preference variants, larger-coordinate-first, and smaller-coordinate-first. This family is intentionally small and constant-size, so orientation selection does not change the asymptotic complexity.

\section{Multi-Orientation Edge-Minimum Repair}

\subsection{Component Graph}

Let $T_\theta$ be the broadcast tree for orientation $\theta$. Removing the faulty vertices and their incident tree edges gives a forest
\[
    T_\theta-F=C_1\cup C_2\cup\cdots\cup C_c.
\]
Let $C_s$ be the component containing the source. The component graph $\mathcal{C}_\theta$ has one vertex for each component $C_j$. Two component vertices are adjacent in $\mathcal{C}_\theta$ if there exists at least one healthy Gaussian graph edge joining the corresponding components.

Fig.~\ref{fig:component-graph} shows the component-level reduction: once the fault-pruned tree is contracted into \(c\) healthy components, any non-redundant repair must connect those components using at least \(c-1\) external crossing edges.

\begin{figure}[H]
\centering
\includegraphics[width=0.95\linewidth]{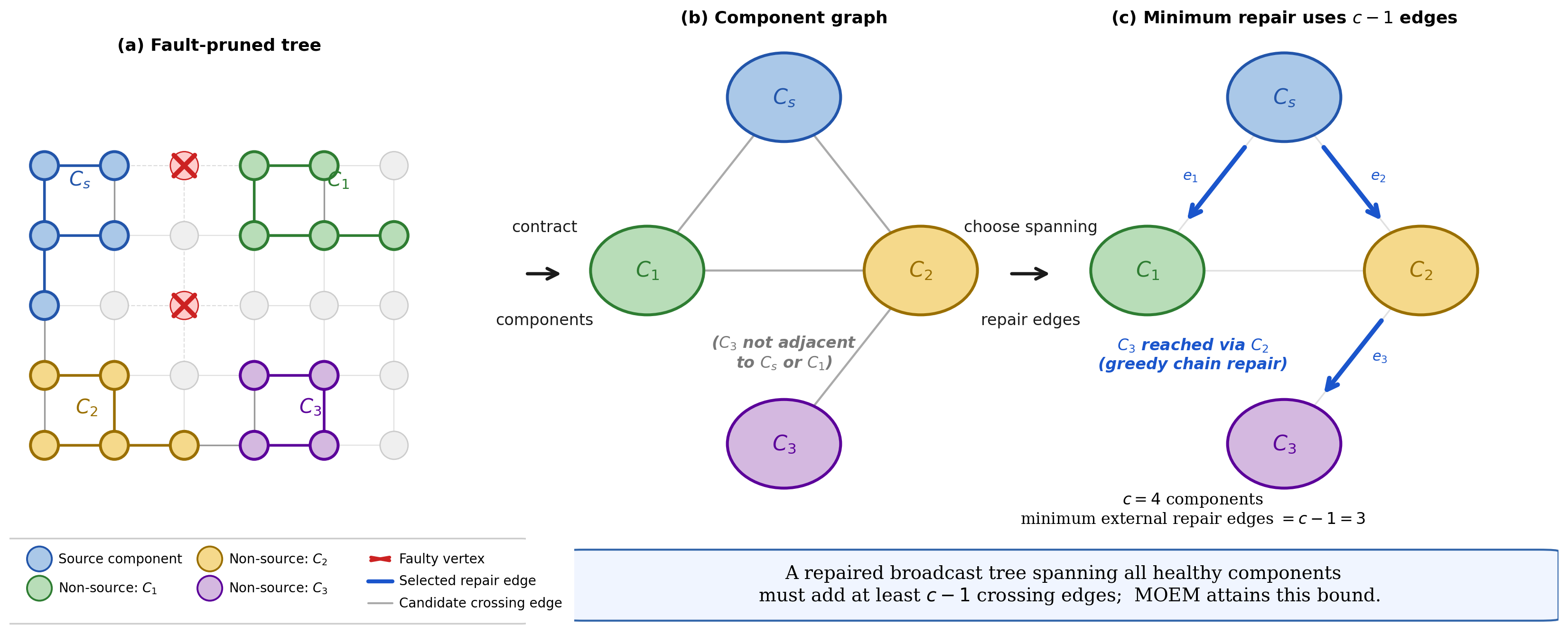}
\caption{Component contraction view of MOEM: fault-pruned tree components, the induced component graph, and the minimum \(c-1\) repair edges.}
\label{fig:component-graph}
\end{figure}

The repair problem at the component level is to connect $C_s$ to all other components using component-crossing graph edges. Every selected crossing edge becomes an external repair edge.

\subsection{Edge-Minimum Component Repair}

At each step, MOEM chooses a crossing edge $(a,b)$ where $a$ is already in the connected repaired part and $b$ is in an unrepaired component $C$. The selected edge minimizes the predicted depth
\begin{equation}
    D(a,b,C)=\dist(s,a)+1+\operatorname{ecc}_C(b),
    \label{eq:preddepth}
\end{equation}
where $\operatorname{ecc}_C(b)$ is the maximum distance from $b$ to a node in component $C$ using the current internal tree of $C$. Ties prefer $H_2$-type repair edges, then smaller components, and then a deterministic edge order.

\begin{algorithm}[H]
\caption{Edge-Minimum Repair for One Orientation}
\label{alg:edge-min}
\begin{algorithmic}[1]
\REQUIRE Dense Gaussian network $\Gk$, source $s$, fault set $F$, orientation tree $T_\theta$
\ENSURE A repaired rooted tree over $\Gk-F$, or failure
\STATE Remove $F$ and incident tree edges from $T_\theta$
\STATE Compute components $C_1,\ldots,C_c$ and the source component $C_s$
\STATE Initialize the repaired component set $\mathcal{R}\leftarrow\{C_s\}$
\STATE Initialize the repaired edge set with the tree edges inside $C_s$
\WHILE{$|\mathcal{R}|<c$}
    \STATE Find all Gaussian edges from a component in $\mathcal{R}$ to a component outside $\mathcal{R}$
    \STATE Select the edge minimizing~\eqref{eq:preddepth}, with deterministic tie-breaks
    \STATE Add the selected crossing edge and the new component to the repaired tree
\ENDWHILE
\STATE Root the resulting edge set at $s$ and validate the tree
\end{algorithmic}
\end{algorithm}

\subsection{Multi-Orientation Selection}

The full MOEM algorithm evaluates the full constant-size family of eight orientations. The final candidate is selected lexicographically by
\[
    (\text{success},\ \text{depth},\ \text{repair edges},\ \text{maximum out-degree},\ \text{orientation rank}),
\]
where successful candidates are ranked before unsuccessful candidates.

\begin{algorithm}[H]
\caption{MOEM Broadcast Repair}
\label{alg:moem}
\begin{algorithmic}[1]
\REQUIRE $\Gk$, source $s$, fault set $F$, orientation family $\Theta$
\ENSURE Non-redundant broadcast tree over $\Gk-F$
\STATE Use the full orientation family, $\Theta_F\leftarrow\Theta$
\STATE $\mathcal{B}\leftarrow\emptyset$
\FORALL{$\theta\in\Theta_F$}
    \STATE Construct the fault-free oriented Gaussian tree $T_\theta$
    \STATE Apply Algorithm~\ref{alg:edge-min} to $T_\theta$
    \STATE Insert the validated candidate into $\mathcal{B}$
\ENDFOR
\STATE Return the best valid candidate in $\mathcal{B}$ under the lexicographic score
\end{algorithmic}
\end{algorithm}

\subsection{Correctness}

\begin{lemma}
Let $T$ be a spanning tree of $\Gk$, and let $F$ be a set of faulty vertices not containing the source. If $T-F$ has $c$ connected components, then any repaired broadcast tree that uses all healthy vertices must add at least $c-1$ component-crossing edges.
\end{lemma}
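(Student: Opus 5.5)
The plan is a contraction (quotient) argument on the component partition, followed by the elementary fact that a connected graph on $c$ vertices has at least $c-1$ edges. I first need to pin down the intended reading of the statement. The repaired tree is built from the healthy components $C_1,\ldots,C_c$ of $T-F$, keeping their internal tree edges, and it adds edges of $\Gk-F$. An added edge is \emph{component-crossing} if its endpoints lie in different components $C_i\ne C_j$. Since every vertex of $\Gk-F$ lies in exactly one $C_j$, every added edge is either internal to some component or crossing.

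First, let $R$ be any repaired broadcast tree that spans all healthy vertices. $R$ is a spanning tree of $\Gk-F$ rooted at $s$, so it is connected. Let $X\subseteq E(R)$ be the set of edges of $R$ whose endpoints lie in different components. I will show $|X|\ge c-1$. Since every edge of $X$ is component-crossing, $X$ consists of repair edges: no edge of $T-F$ joins two different components, by the definition of the components.

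Second, form the multigraph $Q$ obtained from $R$ by contracting each vertex set $V(C_j)$ to a single vertex $v_j$. Edges of $R$ with both ends in the same $C_j$ become loops and are discarded. Each edge of $X$ becomes an edge $v_iv_j$ with $i\ne j$.

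Third, show that $Q$ is connected. Take any $v_i,v_j$, and pick healthy vertices $a\in C_i$ and $b\in C_j$. Since $R$ is connected, there is a path from $a$ to $b$ in $R$. Its image under the contraction is a walk from $v_i$ to $v_j$ in $Q$, once the discarded loops are removed. So $Q$ is a connected multigraph on $c$ vertices with exactly $|X|$ non-loop edges.

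Finally, a connected (multi)graph on $c$ vertices contains a spanning tree and hence has at least $c-1$ edges. Therefore $|X|\ge c-1$.

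The main point requiring care is the bookkeeping rather than any hard estimate. One must check that a non-loop edge of $Q$ really corresponds to a crossing edge, which holds because the $C_j$ partition $V(\Gk)\setminus F$. One must also check that no faulty vertex can serve as a bridge, which holds because $R$ lives entirely in $\Gk-F$. If the intended model also allows discarding some internal tree edges, the same argument still works, since it never used the fact that $R$ retains the edges of $T-F$. I would remark that the bound is attained by Algorithm~\ref{alg:edge-min} exactly when $\mathcal{C}_\theta$ is connected, since each loop iteration adds one crossing edge and one new component.
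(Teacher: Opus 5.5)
Your proposal is correct and uses the same argument as the paper: contract each component of $T-F$ to a supernode, note that the repaired tree must connect the $c$ supernodes, and conclude from the fact that a connected graph on $c$ vertices has at least $c-1$ edges, each of which is a healthy component-crossing edge. Your version only makes explicit the bookkeeping the paper leaves implicit, such as discarding loops, mapping paths to walks in the quotient, and noting that faulty vertices cannot act as bridges.
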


\begin{proof}
Contract each connected component of $T-F$ to one supernode. A broadcast tree spanning all healthy vertices must connect these $c$ supernodes. Any connected graph on $c$ vertices has at least $c-1$ edges. Each such edge corresponds to a healthy graph edge crossing between two components. Thus at least $c-1$ component-crossing edges are necessary.
\end{proof}

\begin{theorem}
\label{thm:minrepair}
Suppose the component graph $\mathcal{C}_\theta$ of $T_\theta-F$ is connected. Then the edge-minimum repair step returns a non-redundant broadcast tree over all healthy vertices and uses exactly $c-1$ external component-repair edges.
\end{theorem}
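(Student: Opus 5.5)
The argument has three parts: the loop terminates successfully, the output is a rooted spanning tree of $\Gk-F$ (which gives non-redundancy), and the edge count is exactly $c-1$, matching the lower bound of the preceding lemma.

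\emph{Invariant and termination.} At the start of each iteration of Algorithm~\ref{alg:edge-min}, the current edge set $E_{\mathcal R}$ will be shown to satisfy the following. It consists of the internal tree edges of every component in $\mathcal R$, together with exactly $|\mathcal R|-1$ crossing edges. Moreover, the subgraph it induces on $V(\mathcal R)=\bigcup_{C\in\mathcal R}C$ is a tree. Initially $\mathcal R=\{C_s\}$ and $E_{\mathcal R}$ is the edge set of $C_s$. Since $C_s$ is a connected subgraph of the tree $T_\theta$, it is a tree, so the invariant holds. Termination uses the hypothesis that $\mathcal C_\theta$ is connected. If $\mathcal R$ is a proper nonempty subset of the component vertices, connectivity of $\mathcal C_\theta$ forces some edge of $\mathcal C_\theta$ between $\mathcal R$ and its complement. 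That edge witnesses a healthy Gaussian edge $(a,b)$ with $a\in V(\mathcal R)$ and $b\in C\notin\mathcal R$. So the candidate set in the loop is nonempty and the minimization in~\eqref{eq:preddepth} is well defined. Each iteration adds one component, so the loop ends after exactly $c-1$ iterations.

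\emph{Preservation of the invariant.} Suppose the step selects $(a,b)$ with $b\in C$. Then $C$ is a tree disjoint from $V(\mathcal R)$. Gluing two disjoint trees by one edge joining them gives a tree on the union, and the number of crossing edges increases by one. Hence at termination $V(\mathcal R)$ is the union of all components. Since the components partition the healthy vertex set $V(\Gk)\setminus F$, the final edge set is a spanning tree of $\Gk-F$. It has exactly $c-1$ external repair edges, and it uses no faulty vertex: tree edges inside components avoid $F$ by construction, and crossing edges are healthy edges by definition.

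\emph{Rooting, non-redundancy, and minimality.} Rooting this spanning tree at $s$ gives every healthy non-source node a unique parent, namely its neighbor on the unique tree path to $s$. Hence each healthy node receives the message exactly once, which is the non-redundancy claim. By the preceding lemma, any repair spanning all healthy vertices needs at least $c-1$ crossing edges, so the count $c-1$ is the minimum possible. The tie-breaking rules and the depth objective affect only \emph{which} crossing edge is chosen, never how many are chosen, so they do not enter the argument.

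The only point needing care is the case where several crossing edges run into the same unrepaired component $C$. The algorithm must add exactly one of them per iteration and then place $C$ in $\mathcal R$. Otherwise a second edge into $C$ would create a cycle. The loop as written does this, because each candidate's endpoint $b$ lies outside $\mathcal R$.
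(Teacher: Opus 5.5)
Your proof is correct and takes essentially the same route as the paper: keep the internal trees of the components, add one crossing edge per non-source component (exactly $c-1$ in total), and root the result at $s$. The differences are minor but welcome: the paper gets acyclicity by contracting components onto the chosen spanning tree of $\mathcal{C}_\theta$, whereas your invariant that gluing two disjoint trees by one edge yields a tree avoids the closed-walk-versus-cycle subtlety in that contraction step, and you explicitly use the connectivity of $\mathcal{C}_\theta$ to show the loop never stalls, which the paper takes for granted.
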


\begin{proof}
Because $\mathcal{C}_\theta$ is connected, there exists a spanning tree of the component graph rooted at the source component. The repair procedure selects one crossing edge for each non-source component, so it adds exactly $c-1$ external edges. Inside each component, the original fault-pruned structure is a tree. Contracting each component after the selected crossings are added gives exactly the chosen component-level tree; therefore any cycle in the repaired graph would contract to a cycle in that component tree, which is impossible. Hence connecting these component trees cannot create a cycle and connects all healthy vertices. Rooting the resulting connected acyclic graph at $s$ gives exactly one parent to every healthy non-source node and no parent to $s$. Therefore the resulting structure is a non-redundant broadcast tree.
\end{proof}

\begin{corollary}
MOEM returns a correct non-redundant broadcast whenever at least one tested orientation has a connected healthy component graph.
\end{corollary}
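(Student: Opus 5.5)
The corollary should follow almost directly from Theorem~\ref{thm:minrepair} combined with the selection rule of Algorithm~\ref{alg:moem}. I will argue in three steps: first that the hypothesized orientation yields a valid candidate, then that the lexicographic selection cannot return an invalid one, and finally that whatever candidate is returned is a non-redundant broadcast tree.

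\textbf{Step 1.} Fix an orientation $\theta^\ast\in\Theta$ whose component graph $\mathcal{C}_{\theta^\ast}$ is connected. I first verify that Algorithm~\ref{alg:edge-min} does not stall on $\theta^\ast$. In each iteration of the while loop, the repaired set $\mathcal{R}$ is a proper nonempty subset of the component vertices. Since $\mathcal{C}_{\theta^\ast}$ is connected, some healthy Gaussian edge joins a component in $\mathcal{R}$ to a component outside $\mathcal{R}$. Hence the candidate edge set is nonempty, a minimizer of~\eqref{eq:preddepth} exists (the candidate set is finite), and each iteration adds one new component. After $c-1$ iterations the loop terminates. Theorem~\ref{thm:minrepair} then guarantees that the output is a rooted spanning tree of $\Gk-F$ in which every healthy non-source node has exactly one parent, so validation succeeds.

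\textbf{Step 2.} Next I use the selection rule. Candidates are ranked lexicographically with success first, and $\theta^\ast$ is among the evaluated orientations because $\Theta_F=\Theta$. So $\mathcal{B}$ contains at least one successful candidate, and the returned candidate is successful. Ties and the later score coordinates (depth, repair count, out-degree, orientation rank) affect only which successful candidate is chosen, not whether it is successful.

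\textbf{Step 3.} I need to show that \emph{any} successful candidate is a correct non-redundant broadcast, not only the one from $\theta^\ast$. A candidate is marked successful only if it passes the validation in the last line of Algorithm~\ref{alg:edge-min}. That validation checks exactly that the edge set is a tree spanning $V(\Gk)\setminus F$ and rooted at $s$. Every edge used is either a tree edge of $T_\theta$ between healthy nodes or a healthy crossing edge, so no faulty node appears. Rooting a spanning tree at $s$ gives each healthy non-source node exactly one parent, so each healthy node receives the message exactly once.

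The main point requiring care is this last step: interpreting ``validated'' precisely enough that it implies the non-redundancy properties. If the validation were weaker, I would instead apply Theorem~\ref{thm:minrepair} to the selected orientation, noting that the algorithm succeeds on an orientation only when its component graph is connected.
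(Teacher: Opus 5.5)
Your proposal is correct and follows essentially the paper's own implicit argument. Theorem~\ref{thm:minrepair} applied to the orientation with connected component graph yields a successful candidate, and because MOEM ranks successful candidates first and returns only a validated one, the output is a non-redundant spanning tree of $\Gk-F$. Your extra checks simply make explicit details the paper leaves unstated: that the loop in Algorithm~\ref{alg:edge-min} never stalls when $\mathcal{C}_{\theta}$ is connected, and that a successful candidate from any other orientation is also valid, since the loop fails exactly when that orientation's component graph is disconnected.
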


The experiments in Section~\ref{sec:validation} found this condition in all tested one- and two-fault cases.

\begin{theorem}[Orientation dominance]
Let $S(\cdot)$ be the lexicographic candidate score used by MOEM. Since MOEM evaluates the full orientation family $\Theta$, if $T_\theta$ is any fixed orientation in $\Theta$, then the candidate returned by MOEM has score no worse than the repaired candidate obtained from $T_\theta$ alone.
\end{theorem}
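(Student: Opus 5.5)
The argument is a direct consequence of how Algorithm~\ref{alg:moem} selects its output. Fix an orientation $\theta\in\Theta$ and let $B_\theta$ denote the candidate produced by applying Algorithm~\ref{alg:edge-min} to $T_\theta$. This is the same deterministic procedure (same tie-breaks) that MOEM itself runs on $T_\theta$, so ``the repaired candidate obtained from $T_\theta$ alone'' is exactly the element $B_\theta$ inserted into $\mathcal{B}$.

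The first step is to make explicit that $S$ takes values in a totally ordered set. The order is lexicographic on the tuple (success, depth, repair edges, maximum out-degree, orientation rank), where success is encoded so that successful candidates precede unsuccessful ones and every remaining coordinate is an integer. A lexicographic order on finite tuples of totally ordered components is itself a total order. Since $\Theta$ is finite (eight orientations), $\mathcal{B}$ is a finite nonempty set, so a best element $B^\ast$ with respect to $S$ exists. Including orientation rank makes this element unique, so the returned candidate is well defined.

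Second, because Algorithm~\ref{alg:moem} sets $\Theta_F\leftarrow\Theta$ and loops over every $\theta\in\Theta_F$, we have $B_\theta\in\mathcal{B}$. By the definition of a minimum in a totally ordered set, $S(B^\ast)\le S(B_\theta)$, which is the claim. As a remark, the inequality is also meaningful in the first coordinates. If $B_\theta$ is successful, then $B^\ast$ is successful, and its depth is no larger than that of $B_\theta$. When depths are equal, its repair count is no larger. In particular, by Theorem~\ref{thm:minrepair}, when $\mathcal{C}_\theta$ is connected, $B^\ast$ is a valid non-redundant tree whose depth is at most that of the repair of $T_\theta$.

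The only point needing care is the handling of unsuccessful candidates. Algorithm~\ref{alg:moem} says it inserts ``the validated candidate''. I would interpret failed repairs as still being inserted, with success flag false, rather than discarded. With that reading, $B_\theta\in\mathcal{B}$ holds in every case and the comparison is trivial. If instead failures were discarded and $T_\theta$ failed, the statement would hold vacuously (or against an empty comparator), so the claim survives under either reading.
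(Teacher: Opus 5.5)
Your proposal is correct and follows the same argument as the paper: the candidate from $T_\theta$ is one of the elements of $\mathcal{B}$ over which MOEM minimizes the lexicographic score, so the returned minimum cannot be worse. Your additional care about $S$ being a total order, uniqueness of the minimum via orientation rank, and the treatment of failed candidates makes explicit what the paper's one-line minimization argument leaves implicit.
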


\begin{proof}
MOEM evaluates every orientation in $\Theta$ and returns the valid candidate with minimum score. Since the repaired candidate of any fixed $T_\theta\in\Theta$ is one of the candidates considered, the selected candidate cannot have a worse score. The result follows directly from the minimization step.
\end{proof}

\section{The $k+2$ Depth Theorem}
\label{sec:proof-depth}

The previous section proves that, after an orientation has been selected, the component-repair phase is correct and uses the minimum possible number of external component-crossing repair edges. This section proves the depth bound used by MOEM. The proof has three ingredients. First, a certificate lemma converts local attachment inequalities into a global depth bound. Second, a finite set of coordinate repair templates covers one-fault, same-axis, opposite-axis, and separated off-axis two-fault placements. Third, the formerly hard orthogonal-axis pattern is handled by an explicit O6 five-component decomposition.

\begin{definition}[Repair certificate]
Fix an orientation $\theta$ and let $T_\theta-F$ have components $C_1,\ldots,C_c$, with source component $C_s$. A $K$-depth repair certificate is an ordering of the non-source components and crossing edges
\[
    (a_j,b_j),\qquad j=1,\ldots,c-1,
\]
where $a_j$ lies in the union of the source component and previously attached components, $b_j\in C_j$, and
\begin{equation}
    d_j(a_j)+1+\ecc_{C_j}(b_j)\le K .
    \label{eq:certificate}
\end{equation}
Here $d_j(a_j)$ is the depth of $a_j$ in the partially repaired tree before $C_j$ is attached, and $\ecc_{C_j}(b_j)$ is measured inside the original fault-pruned tree component $C_j$.
\end{definition}

\begin{lemma}[Certificate implies bounded depth]
\label{lem:certificate-depth}
If an orientation $\theta$ admits a $K$-depth repair certificate, then there exists a non-redundant repaired broadcast tree of depth at most $K$.
\end{lemma}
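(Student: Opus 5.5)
The plan is to build the repaired tree incrementally, following the order given by the certificate, and to prove by induction on $j$ that the tree built so far is non-redundant and has depth at most $K$. Set $R_0 = C_s$, carrying its internal tree inherited from $T_\theta - F$. For $j \ge 1$, form $R_j$ from $R_{j-1}$ by adding the component $C_j$ with its internal fault-pruned tree, together with the single crossing edge $(a_j,b_j)$. The tree $R_j$ is then rooted at $s$.

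Non-redundancy comes from the same argument as in Theorem~\ref{thm:minrepair}. By the certificate's ordering condition, $a_j$ lies in $R_{j-1}$ and $b_j$ lies in $C_j$, which is disjoint from $R_{j-1}$. So joining the two trees by one edge gives a tree. After all $c-1$ steps we have a spanning tree of $\Gk - F$, since every healthy vertex lies in some component. It uses no faulty vertex, and rooting it at $s$ gives each healthy non-source node exactly one parent.

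For the depth bound, the inductive hypothesis is that every vertex of $R_{j-1}$ has depth at most $K$ in $R_{j-1}$. The base case needs $\ecc_{C_s}(s) \le K$. This holds because depths in $C_s$ equal the original $T_\theta$ depths, which are at most $k$ by the first lemma, and we assume $K \ge k$; the applications use $K = k+2$. If one prefers not to assume $K \ge k$, one can add $\ecc_{C_s}(s) \le K$ as an extra condition of the certificate.

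For the inductive step, depths of vertices already in $R_{j-1}$ do not change, because attaching a new subtree below $a_j$ alters no root paths in $R_{j-1}$. A vertex $v \in C_j$ is reached in $R_j$ through the unique path $s \to a_j \to b_j \to v$. The segment $b_j \to v$ is the path inside the internal tree of $C_j$, so its length is at most $\ecc_{C_j}(b_j)$. Hence
\[
\operatorname{depth}_{R_j}(v) \le d_j(a_j) + 1 + \ecc_{C_j}(b_j) \le K
\]
by inequality~\eqref{eq:certificate}. Here $d_j(a_j)$ is exactly the depth of $a_j$ in $R_{j-1}$, as the definition requires.

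The only delicate point is being consistent about which tree distances are used. The eccentricity must be measured in the original internal tree of $C_j$, and it is: that tree is inserted into $R_j$ unchanged and is never re-rooted in a way that changes internal distances. Likewise, $d_j$ is the depth in the partial tree and not a graph distance. With these conventions fixed, the induction closes and $R_{c-1}$ is the required non-redundant repaired broadcast tree of depth at most $K$. As a remark, it also uses exactly $c-1$ external repair edges.
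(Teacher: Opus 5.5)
Your proof is correct and takes the same route as the paper. You attach the components in certificate order, observe that each step joins a disjoint tree by a single edge so the result is a spanning tree of $G_k-F$, and bound each new vertex's depth by $d_j(a_j)+1+\ecc_{C_j}(b_j)\le K$ while earlier depths stay fixed. Your base case is a genuine tightening rather than a deviation: vertices of $C_s$ keep their $T_\theta$ depths, so you need $K\ge k$ or the added condition $\ecc_{C_s}(s)\le K$, a point the paper's induction passes over silently.
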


\begin{proof}
Attach the components in the order specified by the certificate. When component $C_j$ is attached through $(a_j,b_j)$, every vertex $v\in C_j$ obtains a path from the source through $a_j$, then the crossing edge $(a_j,b_j)$, then the unique internal tree path in $C_j$ from $b_j$ to $v$. Therefore
\[
    \dist(s,v)\le d_j(a_j)+1+\dist_{C_j}(b_j,v)
           \le d_j(a_j)+1+\ecc_{C_j}(b_j)
           \le K.
\]
Previously attached components already satisfy the same bound by induction. Each step connects a new tree component to the already repaired tree by one crossing edge, so no cycle is created and every healthy vertex is included exactly once. Hence the repaired structure is non-redundant and has depth at most $K$.
\end{proof}

\begin{table}[H]
\centering
\caption{Coordinate repair templates used in the $k+2$ proof.}
\label{tab:repair-templates}
\scriptsize
\begin{adjustbox}{max width=\textwidth}
\begin{tabular}{p{0.19\linewidth} p{0.25\linewidth} p{0.25\linewidth} p{0.20\linewidth}}
\toprule
Fault signature & Orientation choice & Detached component shape & Certificate bound\\
\midrule
One off-axis fault $(u,v)$, $uv\ne0$ & Reduce the smaller absolute coordinate first; ties use the transverse sign preference & One monotone triangular or trapezoidal suffix with a lateral boundary edge & $(\rho(f)+1)+1+(k-\rho(f))\le k+2$\\
One axis fault $(a,0)$ & Use a transverse-first orientation near the cut & One axis interval or one-row cap & $(a+1)+1+(k-a)\le k+2$\\
Two faults on the same ray & Sign preference keeps the source-side interval intact & One or two nested axis intervals & Attach the outer interval first; each value is at most $k+2$\\
Two faults on opposite rays & Use the sign-preference orientation that separates the tails & Two independent axis intervals on opposite sides & Each interval has a source-side lateral entry and remaining length at most $k-r$\\
Separated off-axis O3 pair & Four-case split: Case A ($y_1\ne y_2$); Case B ($y_1=y_2$, same sign, ancestor); Cases C--D reduce to Case A & Two independent chains, an inner/outer same-row pair, or one/two-node boundary microcomponents & Cases B--D have value $\le k$; Case A has value $k$, except boundary wrap-around microcomponents with value $k+1\le k+2$\\
Orthogonal-axis pair $(a,0),(0,b)$ & Use the O6 orientation with lower half-ball in the source component & Axis tail, two row fragments, and upper cap & Explicit four-edge certificate in Lemma~\ref{lem:orthogonal-axis}\\
\bottomrule
\end{tabular}
\end{adjustbox}
\end{table}

\begin{lemma}[Greedy preserves an ordered depth certificate]
\label{lem:greedy-preservation}
Suppose the non-source components of $T_\theta-F$ admit an ordering such that, before each component in the ordering is attached, there exists at least one available crossing edge with predicted depth value at most $K$. Then Algorithm~\ref{alg:edge-min}, run on this orientation, produces a candidate of depth at most $K$.
\end{lemma}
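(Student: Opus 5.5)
The plan is an induction over the iterations of Algorithm~\ref{alg:edge-min}, with the invariant that every component attached so far was attached through a crossing edge of predicted value \eqref{eq:preddepth} at most $K$. Once that invariant is established, Lemma~\ref{lem:certificate-depth} finishes the argument. The reason is that the greedy run itself then defines a $K$-depth repair certificate. Its ordering is the order in which greedy attached the components, its edges are the ones greedy selected, and the predicted values $\dist(s,a)+1+\ecc_C(b)$ with $\dist$ taken in the partially repaired tree are exactly the quantities in \eqref{eq:certificate}. Theorem~\ref{thm:minrepair} already guarantees that the output is a non-redundant tree, so the only thing to control is depth.

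For the inductive step, fix a greedy iteration and let $\mathcal{R}$ be the repaired component set at that point. Take the hypothesised ordering $C_1,\dots,C_{c-1}$, and let $C_j$ be the first component in that ordering that is not yet in $\mathcal{R}$. Then $C_s, C_1,\dots,C_{j-1}$ all lie in $\mathcal{R}$, so the repaired part contains everything the ordering needs before $C_j$ is attached. The hypothesis supplies a crossing edge $(a,b)$ with $a$ in this prefix and $b\in C_j$ whose predicted value is at most $K$. That edge still crosses from $\mathcal{R}$ to a component outside $\mathcal{R}$, so it is among the candidate edges greedy scans in this iteration. Greedy minimises \eqref{eq:preddepth}, and its tie-breaks only choose among minimisers. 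Hence the selected edge has value at most the value of $(a,b)$, and therefore at most $K$, which preserves the invariant. Note that greedy may attach a component other than $C_j$; the argument only uses the fact that the minimum is no larger than the value of one available witness.

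The main obstacle is the phrase ``predicted depth value'' when greedy has attached components in a different order from the ordering in the hypothesis. The value of $(a,b)$ depends on $\dist(s,a)$ in the current repaired tree, and that depth could in principle differ from the depth $a$ would have in the tree built by following the hypothesised ordering. I will read the hypothesis the way the lemma is used: whenever the prefix $C_s\cup C_1\cup\dots\cup C_{j-1}$ is contained in the repaired part, some edge into $C_j$ has value at most $K$, measured in that repaired part.

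If I am instead forced to use the depths from the hypothesised tree, I would first try to show that the depth of $a$ in greedy's tree is no larger than its depth in that tree. In the easy case $a\in C_s$ this is immediate, since every $d_j(a)$ is then the fault-free coordinate depth, unchanged by any attachment. That covers the templates in Table~\ref{tab:repair-templates}, whose certificate edges leave the source component. In general I would push the comparison through an exchange argument on the first position where the two orderings diverge. I expect this comparison of depths to be the delicate point of the proof.
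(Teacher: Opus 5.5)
Under the reading you adopt, your argument is correct and is essentially the paper's proof. At each greedy iteration the hypothesis supplies an available crossing edge of value at most $K$, the minimiser can only do better, and the greedy run is then itself a $K$-certificate to which Lemma~\ref{lem:certificate-depth} applies. Your availability step (take the first component of the ordering not yet in $\mathcal{R}$, whose whole prefix is already repaired) is tighter than the paper's. The paper claims the repaired side ``contains at least the components that would have been available in the certificate ordering at the same stage'', which is not literally true once greedy reorders. Also, the paper's sentence ``at any iteration, the hypothesis supplies an available edge with predicted value at most $K$'' silently adopts exactly the reading you make explicit.

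Your fallback, however, would not work, and one supporting claim is wrong. First, the comparison ``depth of $a$ in greedy's tree $\le$ depth of $a$ in the certificate tree'' is false for general forests. Greedy minimises the depth of the \emph{deepest} vertex of the component it attaches, not the depth of a vertex later used as an attachment point. For instance, let $C_s$ contain $s$ and a vertex $u$ of depth $4$, let $A$ be a path $v_0v_1\cdots v_{10}$ and $B$ a path $w_0\cdots w_8$, and let the only crossing edges be $\{s,v_0\}$, $\{u,v_5\}$, $\{v_0,w_0\}$. The ordering $(A,B)$ has values $0+1+10=11$ and $1+1+8=10$, so it is an $11$-certificate. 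Greedy instead first takes $\{u,v_5\}$ (value $10$), which puts $v_0$ at depth $10$, and $B$'s only entry then has value $19$. So no exchange argument can rescue the literal reading in general. The step goes through only under your strengthened hypothesis, or when every witness edge starts in $C_s$, whose depths never change.

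Second, the templates do not all leave $C_s$. In Lemma~\ref{lem:orthogonal-axis} the cap is attached by $e_4=\{(1,b),(1,b+1)\}$ with $(1,b)\in C_b^+$. Its value $k+2$ uses depth $b+1$ for $(1,b)$, which holds only if $C_b^+$ was entered through $e_2$. When $b\ge2$ and $k-b\ge3$, the edge $\{(2,b-1),(2,b)\}$ ties with $e_2$ at value $(b+1)+1+(k-b-2)=k$. If greedy uses it, the same computation for $e_4$ gives only $(b+3)+1+(k-b)=k+4$. So in precisely the hard case, the case proof verifies the literal hypothesis, not the strengthened one your argument needs. The paper's proof has the same blind spot, but your remark that the templates are covered by the $a\in C_s$ case is not correct.
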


\begin{proof}
At any iteration, the hypothesis supplies an available edge with predicted value at most $K$. Algorithm~\ref{alg:edge-min} selects a crossing edge minimizing the predicted value in~\eqref{eq:preddepth}; therefore the edge it actually selects also has value at most $K$. After this selection, the repaired side contains at least the components that would have been available in the certificate ordering at the same stage, so the next certified edge remains available when its component is considered. Repeating this argument gives a sequence of selected edges satisfying~\eqref{eq:certificate}. Lemma~\ref{lem:certificate-depth} then gives depth at most $K$.
\end{proof}

In Lemmas~\ref{lem:onefault}--\ref{lem:orthogonal-axis}, the certificate edges are given in an explicit attachment order. Therefore the availability condition in Lemma~\ref{lem:greedy-preservation} is verified inside the case proofs rather than assumed separately in Theorem~\ref{thm:k2}.

\subsection{Coordinate Templates}

For a point \(u=(x,y)\), let \(\rho(u)=|x|+|y|\). A detached component is called an \(r\)-suffix if every vertex in it lies on layers at least \(r\) and every internal parent step reduces the layer by one until the cut vertex is reached. Table~\ref{tab:repair-templates} summarizes the finite coordinate templates used in the \(k+2\) proof. The following elementary observation is used repeatedly.

\begin{lemma}[Layer-suffix attachment]
\label{lem:suffix}
Let $C$ be a detached suffix whose entry vertex $b$ has minimum layer at least $r+1$, and suppose there is a healthy crossing edge $(a,b)$ with $a$ in the repaired part and $\dist(s,a)\le r+1$. If $\ecc_C(b)\le k-r$, then attaching $C$ through $(a,b)$ has certificate value at most $k+2$.
\end{lemma}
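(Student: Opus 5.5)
The plan is to plug the three hypotheses straight into the certificate inequality~\eqref{eq:certificate} and add them up. When $C$ is attached through $(a,b)$, the certificate value is
\[
d(a)+1+\ecc_C(b).
\]
Here $d(a)$ is the depth of $a$ in the partially repaired tree at the moment $C$ is attached. The hypothesis bounds $\dist(s,a)$ by $r+1$, and I read this as the depth of $a$ in the repaired part, which is what the greedy predicted value~\eqref{eq:preddepth} uses. With that reading the three terms are bounded by $d(a)\le r+1$, the single crossing edge $(a,b)$, and $\ecc_C(b)\le k-r$. Summing gives
\[
(r+1)+1+(k-r)=k+2.
\]
This is exactly the pattern recorded in the last column of Table~\ref{tab:repair-templates}.

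The steps, in order, are as follows.

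First, I check that $(a,b)$ is admissible as a certificate edge. It is healthy by hypothesis. The endpoint $a$ lies in the repaired part, meaning the source component together with previously attached components. The endpoint $b$ lies in $C$, which has not yet been attached.

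Second, I substitute the three bounds into~\eqref{eq:certificate} to obtain the value $k+2$.

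Third, I invoke Lemma~\ref{lem:certificate-depth} to conclude that every $v\in C$ gets repaired depth at most $k+2$. The path to $v$ runs through $a$, across the edge $(a,b)$, and then along the unique internal tree path in $C$ from $b$ to $v$, whose length is at most $\ecc_C(b)$.

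The hypothesis that the entry vertex has layer at least $r+1$ is not needed for the inequality itself. I would remark that it explains why the bound $\ecc_C(b)\le k-r$ is natural. Every vertex of an $r$-suffix lies on layers between $r$ and $k$, and internal parent steps move down one layer at a time toward the cut. So in the templates, $\ecc_C(b)\le k-r$ can be checked layer by layer.

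The main obstacle is the interpretation of $\dist(s,a)$. If it means graph distance in $\Gk$ rather than depth in the current repaired tree, the lemma does not follow without an extra argument that the repaired depth of $a$ equals its distance. I would handle this by stating explicitly that $\dist(s,a)$ denotes depth in the partially repaired tree, as in~\eqref{eq:preddepth}. Alternatively, one can note that in every application $a$ lies in the source component, whose tree depths equal the layer $\rho(a)=|x|+|y|$, or that $a$ was attached under an earlier certificate with this same bound. With that convention fixed, the lemma is a direct instance of Lemma~\ref{lem:certificate-depth}, and it also satisfies the availability condition of Lemma~\ref{lem:greedy-preservation}.
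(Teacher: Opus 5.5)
Your proposal is correct and follows the paper's proof, which simply substitutes the three bounds into $\dist(s,a)+1+\ecc_C(b)$ to obtain $(r+1)+1+(k-r)=k+2$. Your explicit convention that $\dist(s,a)$ means the depth of $a$ in the partially repaired tree is a worthwhile clarification, since the paper silently identifies it with $d_j(a_j)$ in \eqref{eq:certificate}; you are also right that the layer hypothesis on $b$ is never used.
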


\begin{proof}
Substituting the three assumed inequalities gives
\[
    \dist(s,a)+1+\ecc_C(b)\le (r+1)+1+(k-r)=k+2.
\]
Thus the crossing edge satisfies~\eqref{eq:certificate} with $K=k+2$.
\end{proof}

\subsection{One-Fault Shielding}

\begin{lemma}[One-fault $k+2$ certificate]
\label{lem:onefault}
For every non-source fault $f\in \Gk$, at least one MOEM orientation admits a $(k+2)$-depth repair certificate for $T_\theta-\{f\}$.
\end{lemma}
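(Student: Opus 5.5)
The plan is to avoid the smaller-coordinate-first orientation altogether and certify every single fault with only the $x$-first tree~\eqref{eq:xfirst} and its $y$-first mirror, both of which are in the MOEM family. Fix a non-source fault $f=(u,v)\in B_k$. Two general facts are used throughout.

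\emph{(i) The components are the child subtrees of $f$.} Removing one vertex from the rooted tree $T_\theta$ leaves the source component, which is all of $T_\theta$ outside the subtree of $f$, plus one component for each child subtree of $f$. Every source-component vertex keeps its fault-free tree path, so its depth equals its layer $\rho(\cdot)\le k$.

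\emph{(ii) The attachment order needs no checking here.} In each case below there will be at most one detached component. The availability condition in Lemma~\ref{lem:greedy-preservation} is then automatic, and Algorithm~\ref{alg:edge-min} attains the certificate value. By Lemma~\ref{lem:certificate-depth} it therefore suffices to exhibit one crossing edge $(a,b)$ with $\dist(s,a)+1+\ecc_C(b)\le k+2$.

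\textbf{Off-axis fault} ($uv\neq0$). I will use the $x$-first tree. The vertices whose parent chain passes through $f$ are exactly the points $(x,v)$ with $\sgn(x)=\sgn(u)$ and $|x|>|u|$, since the $x$-coordinate is reduced first while $y=v$ stays fixed. So the detached part is the single horizontal chain starting at $b=(u+\sgn u,\,v)$, provided $\rho(f)<k$; if $\rho(f)=k$, then $f$ is a leaf and $c=1$. Its eccentricity from $b$ is $k-\rho(f)-1$. I take $a=(u+\sgn u,\,v-\sgn v)$. This point is healthy, lies in $B_k$, and its tree path runs along row $v-\sgn v$, avoiding $f$, so $\dist(s,a)=\rho(f)$. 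The certificate value is $\rho(f)+1+(k-\rho(f)-1)=k$.

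\textbf{Axis fault} $f=(a_0,0)$ with $a_0\neq0$. Again I use the $x$-first tree; for faults $(0,b_0)$ the $y$-first tree gives the mirror argument. The subtree of $f$ is the axis ray $\{(x,0):|x|>|a_0|\}$, a single interval entered at $b=(a_0+\sgn a_0,0)$ with eccentricity $k-|a_0|-1$. When $|a_0|\le k-2$, I take $a=(a_0+\sgn a_0,1)$. Its path runs along row $1$ and avoids $f$, and its depth is $|a_0|+2$. This gives the value $(|a_0|+2)+1+(k-|a_0|-1)=k+2$, which is exactly the suffix bound of Lemma~\ref{lem:suffix} with $r=|a_0|+1$.

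\textbf{Boundary cases.} If $|a_0|=k$, then $f$ is a leaf and nothing is detached. If $|a_0|=k-1$, the detached component is the single vertex $(\pm k,0)$ with eccentricity $0$. Its transverse neighbours $(\pm k,\pm1)$ fall outside $B_k$ but, under $\phi$, wrap to genuine nodes of $\Gk$ at some layer $\le k$, since $\phi$ identifies $B_k$ with all of $\Gk$. Those nodes lie in the source component, because the only detached vertex is $(\pm k,0)$. Their depth is at most $k$, so the value is at most $k+1$.

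\textbf{Main obstacle.} The step needing most care is this boundary wrap-around: checking that the $\phi$-image of an out-of-ball transverse neighbour is neither $f$ nor the detached vertex itself. I would verify this directly from the labels $\pm(kk+(k+1)\cdot(\pm1))\bmod N$, and fall back to the $y$-first or a sign-preference orientation if a collision occurred for small $k$. A second point to confirm is that the family's sign-preference variants realise the reflections used in ``without loss of generality'' on signs. This is avoided by treating signs explicitly through $\sgn(u)$ and $\sgn(v)$ as above, so only the plain $x$-first and $y$-first trees are needed.
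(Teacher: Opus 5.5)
Your proof is correct, and in the off-axis case it takes a different route from the paper. The paper reflects $f$ into the first quadrant and uses the smaller-coordinate-first orientation, with a tie rule for $u=v$. It describes the detached set only as a monotone suffix whose lateral entry lies on layer at most $\rho(f)+1$, and then applies Lemma~\ref{lem:suffix} to get value $k+2$. You instead use the $x$-first tree, in which every vertex off the $y$-axis has at most one child. This makes the detached set provably a single row segment. Your diagonal entry $a=(u+\sgn u,\,v-\sgn v)$ lies on layer $\rho(f)$, so the value is $k$; this is the Case~A computation of Lemma~\ref{lem:offaxis} specialized to one fault.

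For axis faults you use the same transverse side entry as the paper, but your bookkeeping is exact. The healthy neighbours of the entry $(a_0+\sgn a_0,0)$ lie on layer $|a_0|+2$. The paper's stated entry depth $a+1$ is one too small, and it is rescued only by its loose eccentricity bound $k-a$. You also handle $|a_0|=k-1$, where $(a_0+\sgn a_0,\pm1)\notin B_k$; the paper's proof does not address this case.

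Your route gives a sharper, fully explicit argument that uses only two members of the family and no appeal to reflection symmetry. The paper's route gives a single layer-suffix template that the two-fault axis lemmas reuse, but it is less precise.

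The ``main obstacle'' you flag is not a real one. The residues $\pm k,\pm(k+1)$ are four distinct nonzero residues modulo $N$, so the wrapped transverse neighbours of $(k,0)$ are neither $f=(k-1,0)$ nor $(k,0)$ itself. Explicitly, modulo the lattice spanned by $(k,k+1)$ and $(-(k+1),k)$, one has $(k,1)\equiv(0,-k)$ and $(k,-1)\equiv(-1,k-1)$. Both are healthy source-component vertices of depth $k$, so the value is exactly $k+1$ and no fallback orientation is needed.
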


\begin{proof}
By symmetry it is enough to consider $f=(u,v)$ with $u\ge0$ and $v\ge0$; the other quadrants are obtained by reflecting signs and using the corresponding sign-preference orientation. Let $r=u+v$.

If $u,v>0$, choose the orientation that first reduces the smaller of $u$ and $v$. If $u=v$, choose the deterministic transverse tie orientation. In this orientation the vertices whose parent chain passes through $f$ form a monotone suffix of the quadrant rooted at the children of $f$. The suffix has remaining height at most $k-r$, because every internal parent step increases or decreases the layer by exactly one and no vertex of $B_k$ has layer larger than $k$. The transverse child of a neighbor of $f$ gives a lateral crossing edge $(a,b)$ from the source component into the suffix. The attaching endpoint $a$ lies on layer at most $r+1$. Therefore Lemma~\ref{lem:suffix} gives a certificate value at most $k+2$.

If $f=(a,0)$ is on an axis, choose an orientation that avoids making the whole adjacent half-ball depend on the positive $x$-axis when a transverse reduction is available. The detached set is then either the interval $\{(x,0):a+1\le x\le k\}$ or a one-row cap adjacent to it. The interval entry is adjacent to a vertex of depth at most $a+1$, and the farthest endpoint is at internal distance at most $k-a$. Hence the certificate value is at most $(a+1)+1+(k-a)=k+2$. The same calculation applies to the one-row cap, whose internal eccentricity is no larger than the remaining radius to the boundary. Thus every one-fault placement has a $(k+2)$ certificate.
\end{proof}

\subsection{Two-Fault Generic Shielding}

\begin{lemma}[Same-axis and opposite-axis two-fault certificates]
\label{lem:axis-pairs}
Let $F=\{f_1,f_2\}$ with both faults on one coordinate axis, or on two opposite coordinate rays. Then some MOEM orientation admits a $(k+2)$-depth repair certificate for $T_\theta-F$.
\end{lemma}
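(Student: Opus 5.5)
By symmetry (reflections of $B_k$ and swapping the roles of $x$ and $y$, paired with the matching sign-preference orientation), it suffices to treat faults on the $x$-axis. There are two configurations: the same-ray case $f_1=(a_1,0)$, $f_2=(a_2,0)$ with $0<a_1<a_2\le k$, and the opposite-ray case $f_1=(a,0)$, $f_2=(-b,0)$ with $a,b>0$. The plan is to fix an orientation and list the detached components explicitly. We then give a crossing edge for each component in a fixed attachment order and check inequality~\eqref{eq:certificate} with $K=k+2$ through Lemma~\ref{lem:suffix}. Lemma~\ref{lem:certificate-depth} then yields the certificate, and Lemma~\ref{lem:greedy-preservation} transfers it to Algorithm~\ref{alg:edge-min}.

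We use a transverse-first ($y$-first) orientation, with the sign preference chosen so that every off-axis vertex reaches the $x$-axis only through the column $x=0$. In this tree the only vertices that route through the positive $x$-axis are the axis points themselves, as in the axis case of Lemma~\ref{lem:onefault}. For the same ray, the detached sets are then the nested intervals $I_1=\{(x,0):a_1<x<a_2\}$ and $I_2=\{(x,0):a_2<x\le k\}$; one of them may be empty, for instance when $a_2=a_1+1$ or $a_2=k$. For $I_1$, the entry vertex is $(a_1+1,0)$. It is adjacent to $(a_1+1,1)$, whose depth in the source component is $a_1+2$. Note that the crossing edge leaves from layer $a_1+1$ rather than $a_1$, so we must check the stronger condition $\ecc_{I_1}\le k-(a_1+1)$. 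This holds because $\ecc_{I_1}=a_2-a_1-2\le k-a_1-2$. The same computation applies to $I_2$ with $a_2$ in place of $a_1$. Attaching $I_2$ directly from the source component means the components never interact, which settles the "outer interval first" ordering in Table~\ref{tab:repair-templates}. Each attachment has value at most $(a_i+2)+1+(k-a_i-2)\le k+1$. For opposite rays, the two tails $\{(x,0):a<x\le k\}$ and $\{(x,0):-k\le x<-b\}$ are disjoint from each other. Each is attached laterally from $(a+1,\pm1)$ or $(-b-1,\pm1)$, with the same bound.

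The main obstacle is the boundary. When $a_i=k$, the fault is at $(k,0)$ and its lateral neighbours $(k,\pm1)$ lie outside $B_k$; in this case, however, the interval beyond it is empty. The critical case is $a_i=k-1$, where the tail is the single vertex $(k,0)$. Its in-ball neighbours are only $(k-1,0)$, which is faulty, and $(k-1,\pm1)$, together with its wrap-around Gaussian neighbours. We attach it through $(k-1,1)$, which has depth $k$, giving value $k+1$. We must also confirm that $(k-1,1)$ is not itself a fault, which holds here because both faults lie on the axis. The last point to check is that the faults do not disconnect the source component's lateral access. Since both faults lie on the $x$-axis and every off-axis vertex is routed through $x=0$, no off-axis vertex is ever detached, so the rows $y=\pm1$ remain in $C_s$ at their fault-free depths. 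I expect most of the care in the writeup to go into this boundary bookkeeping, and into confirming that the chosen sign preference really is one of the eight MOEM orientations.
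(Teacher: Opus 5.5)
Your route is the same as the paper's. You pick an orientation in which the axis faults cut off only axis intervals, attach each interval through a lateral neighbour of its entry vertex, and bound each value by $k+2$. Your depth $a_i+2$ for that neighbour is actually more accurate than the paper's stated $a+1$.

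The one step that fails as written is the boundary case $a_i=k-1$. The vertices $(k-1,\pm1)$ are diagonal to $(k,0)$, not adjacent. Indeed $\phi(k-1,1)-\phi(k,0)=1$ and $\phi(k-1,-1)-\phi(k,0)=-(2k+1)$, and neither is $\pm k$ or $\pm(k+1)$ modulo $N$, so the crossing edge you attach through does not exist. The only in-ball neighbour of $(k,0)$ is the faulty $(k-1,0)$. Its other three neighbours are wrap-arounds: $(k+1,0)\equiv(0,k)$, $(k,1)\equiv(0,-k)$ and $(k,-1)\equiv(-1,k-1)$, with labels $k^2+k$, $k^2+k+1$ and $k^2-k-1$. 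Each of these lies in $C_s$ at depth $k$ in your tree, so attaching $\{(k,0)\}$ through any of them gives value $k+1+0=k+1$. Symmetrically, in the opposite-ray case with $b=k-1$, $(-k,0)$ can be attached through $(0,-k)$, $(0,k)$ or $(1,-(k-1))$. With this substitution the boundary case closes, and it is a case the paper's own proof does not treat at all.

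Two smaller corrections are needed. First, the outer tail $I_2$ (and each opposite-ray tail) contains $(k,0)$. So $\ecc_{I_2}(a_2+1,0)=k-a_2-1$, not $k-a_2-2$, and its value is $(a_2+2)+1+(k-a_2-1)=k+2$ rather than at most $k+1$. This still meets $K=k+2$ and agrees with the paper's tail bound, but the claim that every attachment is at most $k+1$ is false. Second, the tree you describe, in which off-axis vertices reach the $x$-axis only through the column $x=0$, is the $x$-first tree of~\eqref{eq:xfirst}. Under the paper's convention the $y$-first tree reduces $y$ first and makes the $x$-axis the trunk. There, a fault at $(a,0)$ would detach every other vertex with $x\ge a$, so the orientation must be named correctly for the argument to apply.
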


\begin{proof}
Consider first two faults on the same ray, say $(a,0)$ and $(b,0)$ with $1\le a<b\le k$. Use the sign-preference orientation whose source component contains the layer immediately before the first fault. Removing the two faults can detach only axis intervals: the finite interval between the faults and the tail beyond the second fault, with empty intervals ignored. The interval between the faults is entered from a source-side lateral neighbor at depth at most $a+1$ and has length at most $b-a-1\le k-a-1$. Its certificate value is therefore at most $(a+1)+1+(k-a-1)=k+1$. The tail beyond $b$ is entered from a lateral neighbor of depth at most $b+1$ and has remaining length at most $k-b$, giving at most $k+2$.

If the faults lie on opposite rays, for example $(a,0)$ and $(-b,0)$, choose the sign-preference orientation that keeps the two tails on different sides of the source component. Each detached tail has a lateral entry adjacent to the source-side segment. For the positive tail the same bound as above gives $(a+1)+1+(k-a)\le k+2$; for the negative tail the reflected calculation gives $(b+1)+1+(k-b)\le k+2$. The $y$-axis cases are coordinate-swapped copies. Hence a certificate exists in all same-axis and opposite-axis cases.
\end{proof}

\begin{lemma}[Separated off-axis two-fault certificate]
\label{lem:offaxis}
Let $F=\{f_1,f_2\}$ be a two-fault placement in which both faults are off-axis, and assume that the pair is not reducible to a same-axis, opposite-axis, or orthogonal-axis pair by sign reflection and coordinate exchange. Then at least one MOEM orientation admits a $(k+2)$-depth repair certificate for $T_\theta-F$. More strongly, every separated off-axis component has certificate value at most $k$, except for boundary wrap-around microcomponents whose value is $k+1$.
\end{lemma}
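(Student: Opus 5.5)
By sign reflection and coordinate exchange, normalize so that $f_1=(x_1,y_1)$ has $x_1,y_1>0$, and use the smaller-coordinate-first orientation of the one-fault case (Lemma~\ref{lem:onefault}). For a single off-axis fault $f$ in this orientation, the detached set is the monotone suffix $S(f)$ of descendants of $f$. This suffix sits on layers $\rho(f)+1,\dots,k$. It has a lateral entry $b$ adjacent to a source-side vertex $a$ on layer $\rho(f)$, so $a$ is at tree depth $\rho(f)$. Moreover $\ecc_{S(f)}(b)\le k-\rho(f)-1$, because $b$ is the suffix vertex of minimum layer and every internal tree path in the suffix is layer-monotone from $b$. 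Hence the basic value is $\rho(f)+1+(k-\rho(f)-1)=k$, improving on the bound of Lemma~\ref{lem:suffix} by two. The plan is to show that a second off-axis fault perturbs this only in the four ways listed in Table~\ref{tab:repair-templates}.

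\textbf{Case A} ($y_1\ne y_2$). First I would show that the two suffixes $S(f_1)$ and $S(f_2)$ are vertex-disjoint and neither fault lies on the other's parent chain, or on the chain of the other's entry neighbor $a$. The reason is that in smaller-first orientations a parent chain changes only one coordinate until it hits the diagonal or an axis, so distinct rows produce non-nested chains. Each suffix is then attached independently with value $k$, using the computation above. The exception is a fault within distance one of the boundary layer $k$. There the lateral entry of the suffix may need to come from the adjacent row through a wrap-around edge of $\Gk$ (a $\pm k$ or $\pm(k+1)$ difference leaving $B_k$). This yields a one- or two-node microcomponent whose attaching vertex has depth at most $k$ and whose eccentricity is at most $0$ or $1$ appropriately, giving value at most $k+1$. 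I would enumerate these boundary configurations explicitly.

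\textbf{Case B} ($y_1=y_2$, same sign, and $f_1$ an ancestor of $f_2$). Here the components are an inner segment between the faults and an outer suffix beyond $f_2$. The inner segment is entered laterally from the row nearer the axis at depth $\rho(f_1)$. Its eccentricity is at most $\rho(f_2)-\rho(f_1)-2$, so its value is at most $k$. The outer suffix is entered laterally at depth $\rho(f_2)$ from the adjacent row, which remains in the source component because $f_1$ does not lie on that row. Its value is therefore $k$ as in the single-fault bound.

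\textbf{Cases C and D} (same row but not ancestor-related, e.g.\ opposite $x$-signs, or the ancestor relation broken by the tie orientation). I would reduce these to Case A. Either the two suffixes lie in different half-planes, and are hence disjoint with unaffected entries, or switching to the transverse sign-preference orientation puts the faults in different parent rows, after which the Case A argument applies verbatim.

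The step I expect to be hardest is the availability claim in Case A: that the lateral entry neighbor $a$ of one suffix is neither in the other suffix nor equal to the other fault. If it fails, I would first attach the other component and then use $a$ at depth at most $k$, as the ordering of Lemma~\ref{lem:greedy-preservation} allows. Combined with the boundary microcomponent analysis, this gives the value $\le k+1\le k+2$, and Lemma~\ref{lem:certificate-depth} concludes.
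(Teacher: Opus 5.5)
Your case structure matches the paper's: Cases A, B, C--D; the lateral entry at layer $\rho(f)$ with value $\rho(f)+1+(k-\rho(f)-1)=k$; inner-before-outer in Case B; and wrap-around microcomponents as the only source of $k+1$. However, the availability step you flagged in Case A is a genuine gap, and your fallback does not close it.

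First, $a$ can be the other fault itself. Take $k=10$ and $F=\{(1,5),(2,4)\}$, whose faults lie on different rows. Then $S(f_1)$ starts at $b=(2,5)$, and the only neighbor of $b$ on layer $\rho(f_1)=6$ is $(2,4)=f_2$. So there is nothing to ``attach first and then use''. The paper's remark that faults on different rows ``cannot both block this side entry'' does not rule this out either.

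Second, when $a\in S(f_2)$, the estimate ``$a$ at depth at most $k$'' only yields $k+1+(k-\rho(f_1)-1)=2k-\rho(f_1)$, not $k+1$.

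The missing observation is this: a chain attached through its own lateral entry at depth $\rho(f_2)$ gives each of its vertices $w$ repaired depth exactly $\rho(w)$. With it, the two blocked situations go as follows.
\begin{itemize}
\item If $a\in S(f_2)$, attach $S(f_2)$ first. Then $a$ keeps depth $\rho(f_1)$, and $S(f_1)$ still has value $k$.
\item If $a=f_2$, i.e.\ $f_1=(u,v)$ and $f_2=(u+1,v-1)$, enter $S(f_1)$ at its second vertex $(u+2,v)$ from $(u+2,v-1)$. That vertex is healthy and has depth $\rho(f_1)+1$, after first attaching $S(f_2)$ if $(u+2,v-1)\in S(f_2)$. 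When $S(f_1)=\{(u+1,v)\}$, enter instead from $(u+1,v+1)$ at depth $\rho(f_1)+2$.
\end{itemize}
In the second situation the value is at most $\max(\rho(f_1)+3,k)\le k+1$. It exceeds $k$ only when $S(f_1)$ is a single vertex on layer $k-1$ or a two-vertex chain ending on layer $k$. Only when $S(f_1)$ is a single vertex on layer $k$ are both of its in-ball neighbors faults ($f_1$ and $f_2$). That is the one place a wrap-around edge is forced, with value at most $k+1$.

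So the microcomponent exception comes from the second fault blocking the entry, not from closeness to the boundary; a lone fault always has its lateral entry inside $B_k$. Also, for a two-node microcomponent, ``depth at most $k$ plus eccentricity $1$'' gives $k+2$, not $k+1$. You need the layer-$(k-1)$ entry into its outer vertex.

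A smaller point: in the smaller-first orientation, a fault with $|x|>|y|$ detaches a column chain. Hence $y_1\ne y_2$ does not make the suffixes non-nested. In $F=\{(5,1),(5,2)\}$, the fault $(5,1)$ is the parent of $(5,2)$. The split between Cases A and B must therefore be by common reduction line, i.e.\ after exchanging coordinates so that the chains run along rows, which is what the paper's normalization does. This pair is then an instance of Case B.
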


\begin{proof}
Write $f_i=(x_i,y_i)$ and $r_i=\rho(f_i)=|x_i|+|y_i|$. A normalized representative means the image of the fault pair after applying sign reflections and, if necessary, exchanging the two coordinates, so that the detached chain being analyzed is entered from the positive-$x$ side and $y_i\ne0$. The relative position of two off-axis faults falls into the four cases below.
\begin{center}
\scriptsize
\begin{tabular}{p{0.12\linewidth}p{0.35\linewidth}p{0.38\linewidth}}
\toprule
Case & Relative position & Certificate value\\
\midrule
A & $y_1\ne y_2$ & two independent chains; value $k$, or $k+1$ for boundary wrap-around microcomponents\\
B & $y_1=y_2$, same sign, ancestor & inner component $<k$, outer component $k$\\
C--D & remaining same-row cases & reduce by reflection or priority exchange to Case A\\
\bottomrule
\end{tabular}
\end{center}

\emph{Case A: different rows, $y_1\ne y_2$.} The two detached components are two independent monotone chains. For the chain generated by $f_i$, use the crossing whose repaired-side endpoint is
\[
    a_i=(x_i+1,\,y_i-\sgn(y_i)),
\]
and whose component-side endpoint $b_i$ is the adjacent first vertex of the detached chain. This is a unit Gaussian edge in the coordinate model. The endpoint $a_i$ lies in the source component because the row shift $y_i\mapsto y_i-\sgn(y_i)$ moves to the transverse side of the chain, and the two faults lie on different rows, so they cannot both block this side entry. In the normalized representative,
\[
    \rho(a_i)=|x_i+1|+|y_i-\sgn(y_i)|=|x_i|+|y_i|=r_i .
\]
The remaining internal chain length from $b_i$ is
\[
    \ecc(b_i)=k-r_i-1.
\]
Therefore the certificate value is
\[
    \dist(s,a_i)+1+\ecc(b_i)
       = r_i+1+(k-r_i-1)=k .
\]
The only boundary subcase occurs when the detached component is a single boundary node or a two-node boundary component. Let $b$ be the boundary entry vertex of the detached component and let $\boldsymbol e\in\{(1,0),(-1,0),(0,1),(0,-1)\}$ be the unit step from $b$ toward the repaired side in the infinite Gaussian grid. If $b+\boldsymbol e\in B_k$, this is the ordinary side-entry crossing. Otherwise choose the unique quotient-lattice shift
\[
    \lambda=m(k,k+1)+n(-(k+1),k)
\]
such that
\[
    a=b+\boldsymbol e-\lambda\in B_k .
\]
The vertices $a$ and $b+\boldsymbol e$ represent the same residue class modulo the Gaussian ideal generated by $k+(k+1)i$, so $\phi(a)\equiv\phi(b+\boldsymbol e)\pmod N$ and $\{a,b\}$ is a valid Gaussian generator edge in $G_k$. For the one-node boundary component, $\ecc_C(b)=0$; for the two-node boundary component, $\ecc_C(b)=1$. The wrapped repaired-side endpoint is reached one layer later than the ordinary side entry, so the corresponding certificate value is at most $k+1\le k+2$.

\emph{Case B: same row, same sign, ancestor relation.} Suppose $y_1=y_2$ and the two faults have the same row sign. If one fault is an ancestor of the other in the selected row orientation, the deletion creates an inner component and an outer component. Attach the inner component first using the Case-A crossing at the inner fault $f_1$ after the local coordinate swap that exposes the transverse side row. Since the inner component terminates before the outer boundary layer, its certificate value is strictly less than $k$. Then attach the outer component through the layer-eccentricity crossing at the outer fault $f_2$. The repaired-side endpoint has depth equal to the cut layer, and the remaining eccentricity is exactly the remaining distance to the boundary. Hence the outer certificate value is $k$.

\emph{Cases C and D: same-row reductions.} The remaining same-row configurations are the non-ancestor and opposite-side same-row configurations. In both cases, a reflection of the row or an exchange of the two coordinate priorities separates the two damaged chains by a healthy transverse side entry. After that reduction, each detached component is exactly the independent-chain situation of Case A. Thus each such component has certificate value at most $k$, with only the same one- or two-node boundary wrap-around exception of value $k+1$.

For two off-axis faults, either the row coordinates differ, or they are equal; in the equal-row case the faults are either in an ancestor relation, a non-ancestor relation under the selected priority, or lie in the opposite-side same-row configuration. Thus the four cases are exhaustive. Therefore every separated off-axis two-fault placement admits a repair certificate of value at most $k+1$, and hence at most $k+2$.
\end{proof}

\subsection{Orthogonal-Axis Case}

The exhaustive proof-mining pass identified the orthogonal-axis pattern as the only case in which a single obvious priority tree may have only one good orientation. The O6 component inspection closes this case. The statement below is written for the positive pair; the other three sign patterns follow by reflection, and exchanging the two axes gives the coordinate-swapped case.

\begin{figure}[H]
\centering
\includegraphics[width=0.95\linewidth]{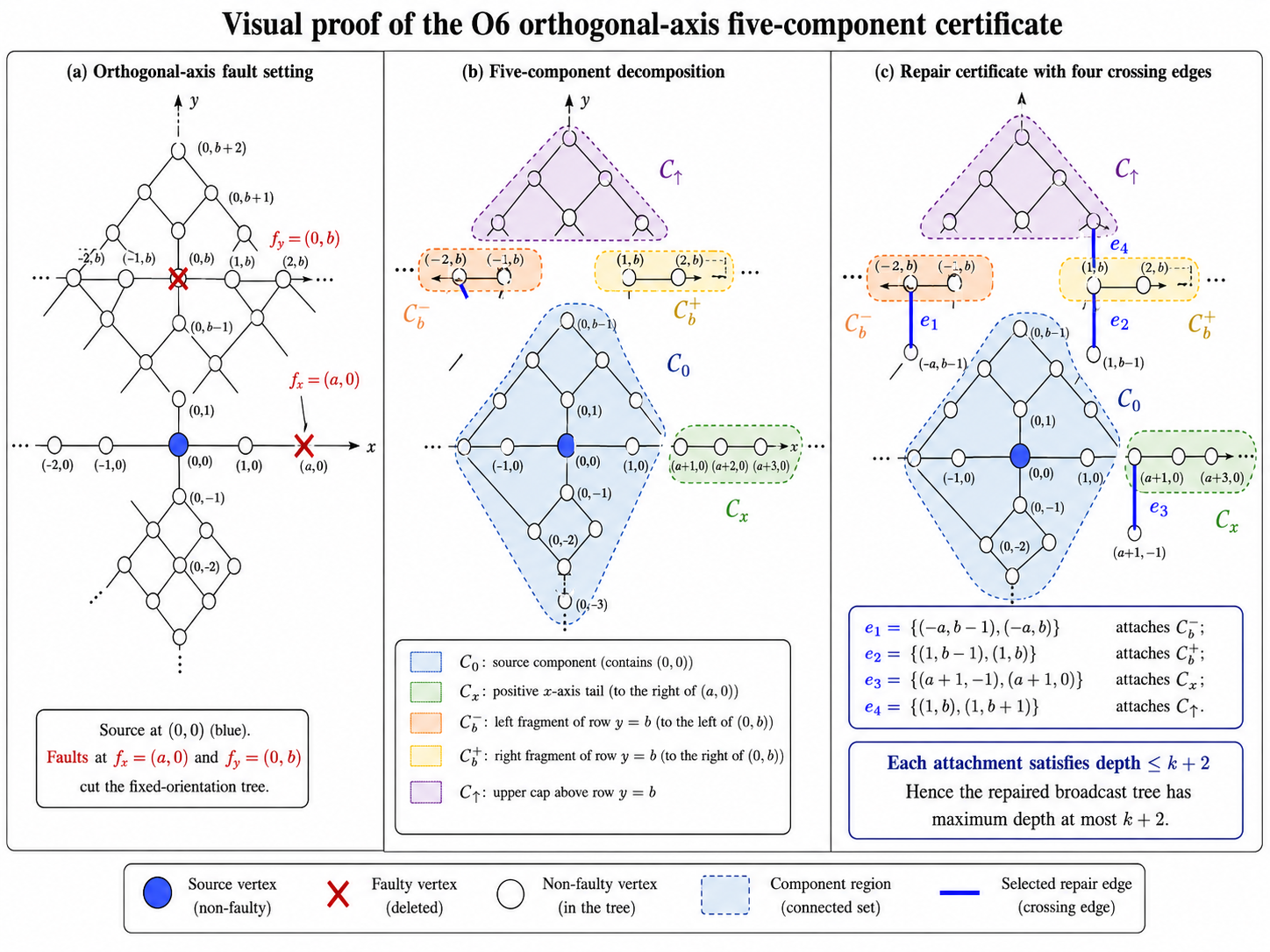}
\caption{O6 orthogonal-axis certificate: five fault-pruned components and four repair edges achieving depth at most \(k+2\).}
\label{fig:o6-components}
\end{figure}

\begin{lemma}[Orthogonal-axis two-fault certificate]
\label{lem:orthogonal-axis}
Assume, up to symmetry, that
\[
    F=\{(a,0),(0,b)\},\qquad 1\le a,b,\qquad a+b\le k.
\]
Then there is a MOEM orientation whose fault-pruned tree has a $(k+2)$-depth repair certificate.
\end{lemma}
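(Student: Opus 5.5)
The plan is to fix one explicit orientation (the "O6" orientation), list the five components of the fault-pruned tree, and give one crossing edge per component in an explicit order. Each crossing edge should satisfy~\eqref{eq:certificate} with $K=k+2$. Lemma~\ref{lem:certificate-depth} then gives the required bounded-depth repair, and Lemma~\ref{lem:greedy-preservation} shows that Algorithm~\ref{alg:edge-min} attains it.

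\emph{The orientation.} I take the $x$-first rule~\eqref{eq:xfirst} on the whole ball. Every node $(x,y)$ first moves horizontally to $(0,y)$ and then descends along the $y$-axis. For $y<0$ the parent chain runs through $(0,y),(0,y+1),\dots,(0,0)$, so it never meets $(a,0)$ or $(0,b)$. Hence the whole lower half-ball lies in the source component $C_s$, and node $(x,y)$ with $y\le 0$ keeps depth $|x|+|y|$.

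For $y\ge0$, removing the two faults produces exactly four further components:
\begin{itemize}
\item the axis tail $A=\{(x,0):a<x\le k\}$;
\item the right row fragment $R=\{(x,b):1\le x\le k-b\}$;
\item the left row fragment $L=\{(x,b):-(k-b)\le x\le -1\}$;
\item the upper cap $U=\{(x,y):y>b\}$.
\end{itemize}
All other nodes with $y\ge 0$ reach the origin without touching a fault and keep their fault-free depth. Empty pieces (for example when $b=k$ or $a=k$) are ignored.

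\emph{Certificate edges, in order.}
\begin{enumerate}
\item Attach $A$ through $((a+1,-1),(a+1,0))$. Here $(a+1,-1)\in C_s$ has depth $a+2$, and $\ecc_A((a+1,0))=k-a-1$, so the value is $(a+2)+1+(k-a-1)=k+2$.
\item Attach $L$ through $((-1,b-1),(-1,b))$. The endpoint $(-1,b-1)$ is healthy, lies in $C_s$, and has depth $b$; also $\ecc_L((-1,b))=k-b-1$. The value is $b+1+(k-b-1)=k$.
\item Attach $R$ symmetrically through $((1,b-1),(1,b))$, again with value $k$. This requires $(1,b-1)\in C_s$, which holds unless $b=1$ and $a=1$.
\item Attach $U$ through $((-1,b),(-1,b+1))$. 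After step 2 the node $(-1,b)$ has depth at most $b+1$. Inside $U$, the distance from $(-1,b+1)$ to $(x,y)$ is $1+(y-b-1)+|x|\le k-b$, so the value is $(b+1)+1+(k-b)=k+2$.
\end{enumerate}
Each endpoint $a_j$ lies in $C_s$ or in an earlier attached component. This verifies the availability hypothesis of Lemma~\ref{lem:greedy-preservation} case by case.

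\emph{Main obstacle.} The delicate part is the degenerate corner $a=b=1$ (and, more generally, small $a$ or $b$). There the natural entry $(1,b-1)=(1,0)$ is itself faulty, and the alternative entries from below lie in the tail $A$, whose depths are too large.

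For this case I would first try entering $R$ from the cap side after attaching $U$ via $L$. Concretely, I would use the edge $((1,b+1),(1,b))$: the endpoint $(1,b+1)$ has depth at most $b+3$ and $\ecc_R((1,b))=k-b-1$, giving value $k+3$. That is one too many, so instead I would reverse the orientation of the row fragment, taking $R$'s internal tree rooted at its far end. If that still fails, I would fall back to a wrap-around edge of the kind used in the boundary subcase of Lemma~\ref{lem:offaxis}, or to the sign-reflected O6 orientation with the roles of the axes exchanged.

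I would also check that the boundary cases $a+b=k$ (where $R$, $L$ and $U$ shrink) only lower the values, since every eccentricity above was bounded by the distance to the boundary layer $k$.
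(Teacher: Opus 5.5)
Your decomposition and ordering are the paper's O6 argument. The $x$-first tree yields exactly the paper's five components, and your edges for $A$ and $R$ are the paper's $e_3$ and $e_2$. Your entry into $L$ at $(-1,b)$, and your routing of the cap through $L$, are if anything sounder than the paper's choices. The paper's $e_1$ enters at $(-a,b)$, whose eccentricity in the path $C_b^-$ is $\max(a-1,k-b-a)$, not $k-b-a$.

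The genuine gap is the corner $a=b=1$, which you leave open, and none of your fallbacks closes it.
\begin{itemize}
\item Entering $R$ from the cap is worse than you estimate. The cap tree routes $(-1,2)\to(0,2)\to(1,2)$, so $(1,2)$ has depth $5=b+4$ and the value is $k+4$.
\item Re-rooting $R$ changes nothing. The certificate eccentricity is taken in the fixed path $R$ and depends only on the entry vertex.
\item Wrap-around does not help. The only wrap-around neighbours of $R$ are $(k,1)\equiv(0,-k)$ and $(k-1,2)\equiv(-1,1-k)$. Both have depth $k$ and attach at the far end $(k-1,1)$, giving value $2k-1$.
\item Exchanging the axes is useless, because $\{(1,0),(0,1)\}$ is invariant under $x\leftrightarrow y$, so the $y$-first tree is the mirror image with the same obstruction.
\end{itemize}
The paper does not supply the fix either: its $e_2$ uses $(1,0)$, which is faulty in this case.

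The missing idea is the one you dismissed: enter $R$ from the tail, one step past $R$'s root $(1,1)$. With $a=1$, your step 1 gives $(2,0)$ depth $4$. In the path $R=\{(1,1),\dots,(k-1,1)\}$ you have $\ecc_R((2,1))=\max(1,k-3)=k-3$. So the edge $((2,0),(2,1))$ has value $4+1+(k-3)=k+2$, and it is available because $A$ precedes $R$ in your order.

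Separately, your closing remark on $a+b=k$ misses that the boundary problem is the existence of the endpoint, not its value. For $a=k-1$ the endpoint $(k,-1)$ lies outside $B_k$, and for $b=k-1$ the endpoint $(-1,k)$ does. Steps 1 and 4 then have no in-ball edge; the paper's $e_3$ and $e_4$ have the same omission. In these cases use the wrap-around $H_2$ edges $(k,0)\sim(0,-k)$ and $(0,k)\sim(-k,0)$. They are valid because $\phi(k,0)+(k+1)\equiv\phi(0,-k)$ and $\phi(0,k)+(k+1)\equiv\phi(-k,0)\pmod N$. Their source-side endpoints lie in $C_s$ at depth $k$ and the detached pieces are single nodes, so each value is $k+1$.
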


\begin{proof}
Use the sign-symmetric orientation whose source component contains the lower half-ball and whose positive-axis faults cut only forward suffixes. The O6 component inspection shows that removing $(a,0)$ and $(0,b)$ produces exactly five components. The source component is denoted by $C_0$. The four detached components are
\[
    C_x=\{(x,0):a+1\le x\le k\},
\]
\[
    C_b^- =\{(x,b):-(k-b)\le x\le -1\},
\]
\[
    C_b^+ =\{(x,b):1\le x\le k-b\},
\]
and
\[
    C_\uparrow=\{(x,y):b+1\le y\le k,\ |x|+y\le k\}.
\]
Thus the two faults do not create arbitrary fragmentation; they create one positive $x$-axis tail, two row fragments on row $y=b$, and one upper cap. Fig.~\ref{fig:o6-components} visualizes this five-component decomposition and the four certificate edges used in the proof.

Attach the row fragments first. The left row fragment $C_b^-$ is attached by
\[
    e_1=\{(-a,b-1),(-a,b)\},
\]
where $(-a,b-1)\in C_0$ and $(-a,b)\in C_b^-$. The farthest vertex of $C_b^-$ from $(-a,b)$ is $(-(k-b),b)$, so
\[
    \ecc_{C_b^-}(-a,b)=(k-b)-a.
\]
Since $(-a,b-1)$ has depth $a+b-1$, the certificate value is
\[
    (a+b-1)+1+((k-b)-a)=k.
\]
The right row fragment $C_b^+$ is attached by
\[
    e_2=\{(1,b-1),(1,b)\}.
\]
The attaching endpoint $(1,b-1)$ has depth $b$, and the farthest vertex of $C_b^+$ from $(1,b)$ is $(k-b,b)$, at internal distance $k-b-1$. Hence
\[
    b+1+(k-b-1)=k.
\]
Next attach the positive $x$-axis tail by
\[
    e_3=\{(a+1,-1),(a+1,0)\}.
\]
The endpoint $(a+1,-1)$ lies in $C_0$ and has depth $a+2$. The farthest vertex of $C_x$ from $(a+1,0)$ is $(k,0)$, at internal distance $k-a-1$. Therefore
\[
    (a+2)+1+(k-a-1)=k+2.
\]
Finally attach the upper cap through the already repaired right row fragment by
\[
    e_4=\{(1,b),(1,b+1)\}.
\]
After $C_b^+$ is attached, the vertex $(1,b)$ has depth $b+1$. The entry point $(1,b+1)$ reaches every vertex of $C_\uparrow$ within at most $k-b$ internal tree steps, because the cap has remaining height $k-b$ and each parent step inside the cap decreases the Manhattan layer by one. Hence the cap certificate value is
\[
    (b+1)+1+(k-b)=k+2.
\]
All four detached components therefore admit attachments of value at most $k+2$. By Lemma~\ref{lem:certificate-depth}, the orthogonal-axis case has a non-redundant repaired tree of depth at most $k+2$.
\end{proof}

\begin{theorem}[MOEM $k+2$ depth theorem for one and two faults]
\label{thm:k2}
For every $k\ge5$ and every fault set $F\subseteq V(\Gk)$ with $|F|\le2$ and source not faulty, the MOEM orientation family contains an orientation whose edge-minimum repair has depth at most $k+2$. Consequently, MOEM returns a valid non-redundant broadcast tree of depth at most $k+2$.
\end{theorem}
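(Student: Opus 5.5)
The plan is to assemble Theorem~\ref{thm:k2} from the case lemmas already proved, using the certificate machinery to pass from existence of a certificate to the actual output of Algorithm~\ref{alg:edge-min} and then of MOEM. The proof has four steps.

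\textbf{Step 1: the trivial case.} If $F=\emptyset$, every orientation tree is already a spanning tree of depth at most $k$ by the first lemma of the paper. Then $c=1$ and no repair is needed.

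\textbf{Step 2: normalization.} For $|F|\ge1$, translate so that the source is at the origin; $\Gk$ is a Cayley graph on $\Z_N$, so this is an automorphism. Every fault then has a unique coordinate $(x,y)\in B_k$ via $\phi$. The symmetries of $B_k$ under sign reflection and coordinate exchange permute the eight MOEM orientations among themselves. This is because the family is closed under these operations: $x$-first $\leftrightarrow$ $y$-first, positive $\leftrightarrow$ negative preference, and larger/smaller-first are invariant. It is therefore enough to treat normalized representatives.

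\textbf{Step 3: exhaustive classification.} I would split into cases as follows.
\begin{itemize}
\item $|F|=1$: covered by Lemma~\ref{lem:onefault}.
\item $|F|=2$, classified by how many faults lie on an axis:
\begin{itemize}
\item both on axes, on the same ray or on opposite rays: Lemma~\ref{lem:axis-pairs};
\item both on axes, on orthogonal rays: Lemma~\ref{lem:orthogonal-axis}, where $a+b\le k$ holds automatically for $(a,0),(0,b)\in B_k$;
\item both off-axis: Lemma~\ref{lem:offaxis}.
\end{itemize}
\end{itemize}
The mixed configuration, with one fault on an axis and one off-axis, is the gap I expect to be the main obstacle, since no lemma states it explicitly. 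My first attempt would be to pick the orientation that is good for the off-axis fault, namely smaller-coordinate-first, and argue the following:
\begin{itemize}
\item the axis fault detaches only an axis interval or a one-row cap, as in Lemma~\ref{lem:onefault};
\item the off-axis fault detaches a monotone suffix, as in Case A of Lemma~\ref{lem:offaxis};
\item the two detached sets are disjoint unless one fault is an ancestor of the other.
\end{itemize}
In the disjoint case, I would attach each piece independently with Lemma~\ref{lem:suffix}, checking that each lateral entry point avoids the other fault. In the ancestor case, I would attach the inner piece first and the outer piece afterwards, as in Case B. If the off-axis fault sits exactly at a lateral entry point, I would switch to the opposite sign-preference orientation, which uses the entry on the other side.

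\textbf{Step 4: from certificate to MOEM output.} In every case the certificate comes with an explicit attachment order, so Lemma~\ref{lem:greedy-preservation} applies. The greedy repair on that orientation therefore yields depth at most $k+2$. Each cited lemma attaches every detached component through a healthy crossing edge, so the component graph of that orientation is connected. Theorem~\ref{thm:minrepair} then makes the candidate a valid non-redundant tree. Finally, the Orientation dominance theorem puts success and depth first in the lexicographic score. The returned candidate is therefore valid and has depth no larger than this certified candidate, which gives the consequence.

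The hypothesis $k\ge5$ ensures that the lateral entry vertices used in the lemmas exist inside $B_k$ and are distinct from the faults. In particular, $(a+1,-1)\in B_k$ requires $a+2\le k$. When $a\ge k-1$, the tail $C_x$ is empty or a single node, and I would handle it with the boundary wrap-around edge from Case A.
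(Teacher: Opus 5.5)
Your Steps 1, 3 and 4 follow the paper's proof. That proof cites Lemmas~\ref{lem:onefault}, \ref{lem:axis-pairs}, \ref{lem:offaxis} and~\ref{lem:orthogonal-axis}, then Lemma~\ref{lem:greedy-preservation}, the lexicographic selection and Theorem~\ref{thm:minrepair}. You are right that the case list is not exhaustive. A pair with one axis fault and one off-axis fault satisfies the hypotheses of none of the three two-fault lemmas, and the paper's proof passes over it.

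The genuine gap is in the argument you supply for that mixed case, which rests on a false structural claim. In the smaller-coordinate-first tree the positive $x$-axis is a trunk: every $(x,y)$ with $0<|y|<x$ reduces $y$ first, lands on $(x,0)$, and then runs along the axis. So deleting $(a,0)$ with $a\ge2$ does not detach an axis interval or a one-row cap. It detaches the comb $W=\{(x,y): x\ge a+1,\ |y|<x,\ x+|y|\le k\}$ (up to tie vertices), together with the column chains $\{(a,y): 1\le |y|\le \min(a-1,k-a)\}$.

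Lemma~\ref{lem:suffix} cannot be applied to $W$ as an independent piece. The vertices of $W$ next to the cut, $(a+1,0)$ and $(a+1,\pm1)$, have no neighbour in the source component; they see only the fault, the column chains and $W$ itself. Any direct entry from the source component must therefore come from near the diagonal or across a wrap-around edge. For example, take $k=10$, $a=3$, with the second fault away from $W$. Every such entry then has value at least $16>k+2$; the entry $(3,3)\sim(4,3)$ gives $6+1+9$. Your ancestor subcase has the same defect, since its inner piece is this comb.

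To close the case, choose the orientation that suits the axis fault rather than the off-axis one. In the $x$-first tree~\eqref{eq:xfirst}, every vertex runs along its row to the $y$-axis and then along the $y$-axis, which is healthy in the mixed case. So $(a,0)$ detaches only the tail $\{(x,0): a<x\le k\}$. The fault $(u,v)$ detaches only the row segment $\{(x,v): \sgn x=\sgn u,\ |u|<|x|\le k-|v|\}$. There are no other components and no ancestor subcase.

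Reflect so that $v>0$ and put $r=|u|+v$. Then:
\begin{itemize}
\item attach the tail through $(a+1,-1)\sim(a+1,0)$, with value $(a+2)+1+(k-a-1)=k+2$;
\item attach the segment through $(u+\sgn u,v-1)\sim(u+\sgn u,v)$, with value $r+1+(k-r-1)=k$;
\item if $v=1$ and $(u+1,0)$ is faulty or lies in the tail, enter from row $2$ instead, with value $k+2$;
\item in the one-node boundary cases where the prescribed entry leaves $B_k$ ($a=k-1$, or $v=1$ with $u=k-2$), use the wrap-around edges $(k,0)\sim(0,-k)$ and $(k-1,1)\sim(0,-k)$, each with value $k+1$.
\end{itemize}
With this substitution the rest of your assembly goes through.
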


\begin{proof}
If $|F|=0$, the fault-free oriented broadcast tree has depth at most $k$. If $|F|=1$, Lemma~\ref{lem:onefault} gives an ordered $(k+2)$ certificate. If $|F|=2$, then the pair is one of the following: same-axis or opposite-axis, separated off-axis, or orthogonal-axis after sign reflection and coordinate exchange. These cases are covered by Lemmas~\ref{lem:axis-pairs}, \ref{lem:offaxis}, and~\ref{lem:orthogonal-axis}, respectively. In each case, the proof gives both the crossing edges and the order in which the corresponding components are attached, so the hypothesis of Lemma~\ref{lem:greedy-preservation} is satisfied for that orientation. Hence the edge-minimum repair candidate for the certified orientation has depth at most $k+2$. MOEM selects the valid candidate of minimum lexicographic score, with depth preceding repair count among successful candidates. Therefore its selected candidate has depth no larger than this certified candidate. Correctness and non-redundancy follow from Theorem~\ref{thm:minrepair}, and the external repair-edge count for the selected orientation is minimum by Theorem~\ref{thm:minrepair}.
\end{proof}

\begin{corollary}[Minimum repair count under the depth bound]
For every one- or two-fault placement covered by Theorem~\ref{thm:k2}, the returned MOEM tree simultaneously satisfies the depth bound $k+2$ and uses exactly $c-1$ external component-repair edges for its selected fault-pruned orientation, where $c$ is the number of components of that orientation after deleting the faults.
\end{corollary}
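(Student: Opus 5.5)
This corollary combines Theorem~\ref{thm:k2} with Theorem~\ref{thm:minrepair}; the only real work is making sure both statements are about the \emph{same} orientation, namely the one MOEM actually returns. Fix a one- or two-fault placement $F$ with $s\notin F$. By Theorem~\ref{thm:k2}, at least one orientation $\theta^\ast\in\Theta$ has an edge-minimum repair that is valid and has depth at most $k+2$. So among the successful candidates in $\mathcal{B}$, the minimum depth is at most $k+2$.

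Let $\hat\theta$ be the orientation MOEM returns. The score is lexicographic with success first and depth second, so $\hat\theta$ is successful and its depth is at most that of $\theta^\ast$, hence at most $k+2$. This gives the depth half of the claim.

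For the repair count, I would argue that $\hat\theta$ being \emph{successful} means Algorithm~\ref{alg:edge-min} attached every non-source component. Each attachment uses a crossing edge of the component graph $\mathcal{C}_{\hat\theta}$, so $\mathcal{C}_{\hat\theta}$ is connected. Theorem~\ref{thm:minrepair} then applies to $\hat\theta$. It shows that the returned tree is non-redundant and uses exactly $c-1$ external repair edges, where $c$ is the number of components of $T_{\hat\theta}-F$. Optimality comes from the lower bound in the preceding lemma: any repair of $T_{\hat\theta}-F$ needs at least $c-1$ component-crossing edges.

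The step needing care is that the minimum-count guarantee is relative to the \emph{selected} orientation $\hat\theta$, not $\theta^\ast$. A different orientation could have fewer components but larger depth, so the corollary must not be read as a global minimum over $\Theta$. I would state this explicitly, noting that the tie-breaking on repair count among depth-minimal candidates is the only cross-orientation comparison MOEM makes. One further check is that the success test in Algorithm~\ref{alg:moem} really does coincide with connectivity of $\mathcal{C}_{\hat\theta}$. This holds because the while-loop fails exactly when no crossing edge leaves $\mathcal{R}$ while $|\mathcal{R}|<c$.
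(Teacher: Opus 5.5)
Your proposal is correct and follows the paper's proof, which simply cites Theorem~\ref{thm:k2} for the depth bound, Theorem~\ref{thm:minrepair} for the exact count $c-1$, and the component lower-bound lemma for necessity. Your added check, that success of the selected orientation $\hat\theta$ forces $\mathcal{C}_{\hat\theta}$ to be connected, makes explicit the hypothesis of Theorem~\ref{thm:minrepair} that the paper leaves implicit.
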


\begin{proof}
The depth bound follows from Theorem~\ref{thm:k2}. The exact count $c-1$ follows from Theorem~\ref{thm:minrepair}. The minimum necessity follows from the component lower bound lemma.
\end{proof}

\section{Experimental Validation}
\label{sec:validation}

The proof is supported by two validation layers. First, exhaustive proof mining covers all one- and two-fault placements for $k=5,\ldots,10$. Second, large-scale validation covers $k=10,25,50,100,200$ under random and structured placements. Each reported case checks coverage, parent uniqueness, acyclicity, source reachability, fault exclusion, repair-edge count, and final depth.

Table~\ref{tab:exhaustive-by-k-v9} gives the condensed exhaustive results split by fault count. For each tested $k$, both fault counts have 100\% success, zero validation failures, maximum overhead two, and maximum repair count at most four. The six two-fault cases with exactly one good orientation are orthogonal-axis signatures; they require the balanced-smaller-first orientation within the tested family and attain depth $k+2$.

\begin{table}[H]
\centering
\caption{Condensed exhaustive validation split by fault count.}
\label{tab:exhaustive-by-k-v9}
\scriptsize
\begin{tabular}{c r c r c c c}
\toprule
$k$ & $N$ & $|F|$ & Cases & Max depth & Max repair & Min good\\
\midrule
5 & 61 & 1 & 60 & 7 & 1 & 6 \\
5 & 61 & 2 & 1,770 & 7 & 4 & 1 \\
6 & 85 & 1 & 84 & 8 & 1 & 6 \\
6 & 85 & 2 & 3,486 & 8 & 4 & 1 \\
7 & 113 & 1 & 112 & 9 & 1 & 6 \\
7 & 113 & 2 & 6,216 & 9 & 4 & 1 \\
8 & 145 & 1 & 144 & 10 & 1 & 6 \\
8 & 145 & 2 & 10,296 & 10 & 4 & 1 \\
9 & 181 & 1 & 180 & 11 & 1 & 6 \\
9 & 181 & 2 & 16,110 & 11 & 4 & 1 \\
10 & 221 & 1 & 220 & 12 & 1 & 6 \\
10 & 221 & 2 & 24,090 & 12 & 4 & 1 \\
\bottomrule
\end{tabular}
\end{table}

Table~\ref{tab:good-orientation-distribution} gives the full distribution of depth-safe orientations in the exhaustive run. It shows why orientation selection is essential: most placements are easy, but the two-fault regime contains six cases with exactly one good orientation. Those six cases are precisely the orthogonal-axis hard signatures handled by Lemma~\ref{lem:orthogonal-axis}.

\begin{table}[H]
\centering
\caption{Good-orientation counts in the exhaustive proof-mining run.}
\label{tab:good-orientation-distribution}
\scriptsize
\begin{tabular}{c r r r r r r r}
\toprule
$|F|$ & 1 & 2 & 4 & 5 & 6 & 7 & 8\\
\midrule
1 & 0 & 0 & 0 & 0 & 22 & 0 & 778 \\
2 & 6 & 12 & 190 & 332 & 2,891 & 39 & 58,498 \\
\bottomrule
\end{tabular}
\end{table}

Table~\ref{tab:smart-random} reports the large-scale random-fault validation with 1,000 trials per row. For $k\ge25$, MOEM succeeds in all trials, uses no fallback, keeps maximum depth at $k+2$, and has average repair count close to $|F|$.

\begin{table}[H]
\centering
\caption{Large-scale random-fault validation of MOEM.}
\label{tab:smart-random}
\scriptsize
\begin{adjustbox}{max width=\textwidth}
\begin{tabular}{c r c r c c c c c c c}
\toprule
$k$ & $N$ & $|F|$ & Trials & Succ. & Fallback & Avg. comp. & Avg. repair & Avg. depth & Max depth & Avg. ms\\
\midrule
10 & 221 & 1 & 1,000 & 100\% & 15.7\% & 2.007 & 2.357 & 10.167 & 11 & 5.6 \\
10 & 221 & 2 & 1,000 & 100\% & 26.0\% & 2.897 & 4.363 & 10.303 & 11 & 6.8 \\
25 & 1,301 & 1 & 1,000 & 100\% & 0.0\% & 1.943 & 0.943 & 25.077 & 27 & 49.0 \\
25 & 1,301 & 2 & 1,000 & 100\% & 0.0\% & 2.943 & 1.943 & 25.240 & 27 & 53.1 \\
50 & 5,101 & 1 & 1,000 & 100\% & 0.0\% & 2.003 & 1.003 & 50.073 & 52 & 338.6 \\
50 & 5,101 & 2 & 1,000 & 100\% & 0.0\% & 2.977 & 1.977 & 50.110 & 52 & 352.7 \\
100 & 20,201 & 1 & 1,000 & 100\% & 0.0\% & 1.997 & 0.997 & 100.013 & 102 & 2058.8 \\
100 & 20,201 & 2 & 1,000 & 100\% & 0.0\% & 2.990 & 1.990 & 100.063 & 102 & 2711.6 \\
200 & 80,401 & 1 & 1,000 & 100\% & 0.0\% & 1.993 & 0.993 & 200.013 & 202 & 10936.3 \\
200 & 80,401 & 2 & 1,000 & 100\% & 0.0\% & 2.980 & 1.980 & 200.027 & 202 & 15531.7 \\
\bottomrule
\end{tabular}
\end{adjustbox}
\end{table}

\begin{table}[H]
\centering
\caption{Deterministic structured MOEM stress tests at $k=200$, $|F|=2$.}
\label{tab:structured-k200}
\scriptsize
\begin{tabular}{l c c c c c}
\toprule
Mode & Succ. & Fall. & Comp. & Repair & Max depth\\
\midrule
axis trunk & 100\% & 0\% & 3.0 & 2.0 & 202\\
boundary leaf & 100\% & 0\% & 1.0 & 0.0 & 200\\
branch internal & 100\% & 0\% & 2.0 & 1.0 & 202\\
$H_1$ close & 100\% & 0\% & 2.0 & 1.0 & 202\\
$H_2$ close & 100\% & 0\% & 2.0 & 1.0 & 202\\
near boundary & 100\% & 0\% & 3.0 & 2.0 & 201\\
source adjacent & 100\% & 0\% & 4.0 & 3.0 & 202\\
two branches & 100\% & 0\% & 3.0 & 2.0 & 202\\
\bottomrule
\end{tabular}
\end{table}

Figure~\ref{fig:repair-count} summarizes the repair-count contrast. Fixed-orientation local repair can require a number of local adjustments that grows with $k$ under trunk-like or same-cycle stress, whereas MOEM's orientation selection keeps the corresponding close-generator repair count near $|F|$ and preserves the $k+2$ depth bound.

\begin{figure}[H]
\centering
\includegraphics[width=0.95\linewidth]{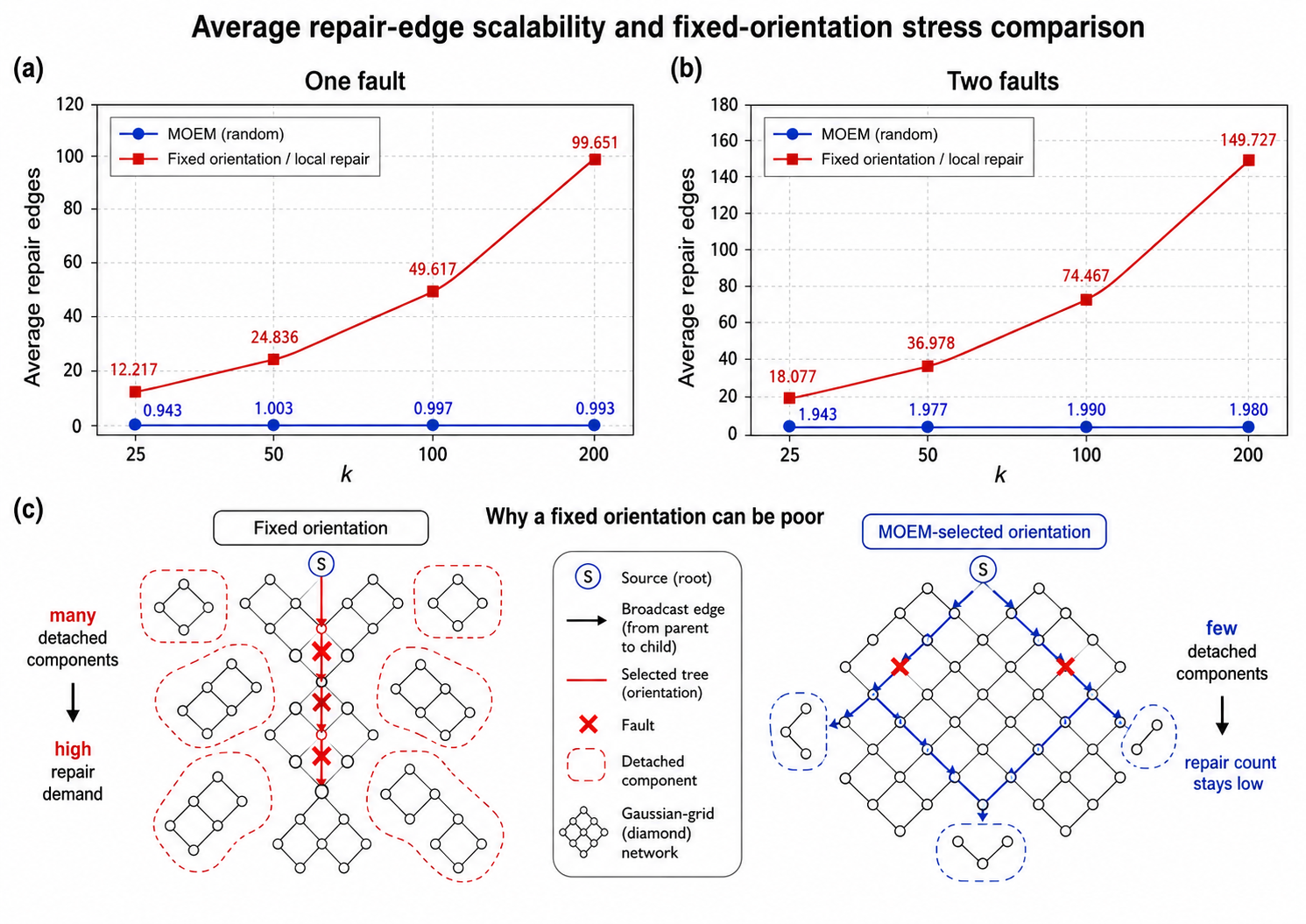}
\caption{Repair-edge scalability: MOEM stays near $|F|$, while fixed-orientation stress repair grows with $k$.}
\label{fig:repair-count}
\end{figure}

The hard O6 example is shown in Fig.~\ref{fig:o6-components}. For $k=10$ and $F=\{(3,0),(0,4)\}$, the fault-pruned tree has five components, four external repair edges, and final depth $12=k+2$. The simulation code and data used to generate the validation tables are available from the corresponding author upon reasonable request.

\section{Discussion}
\label{sec:discussion}

The proved guarantees are intentionally separated from empirical observations. Theorems~\ref{thm:minrepair} and~\ref{thm:k2} prove non-redundancy, $c-1$ external repair-edge optimality for the selected orientation, and depth at most $k+2$ for $|F|\le2$. The validation results quantify typical repair counts, identify rare hard signatures, and check the implementation at larger sizes.

The eight-orientation family is proved sufficient but not claimed minimal. The exhaustive data give a partial lower bound: the six one-good-orientation hard signatures are handled only by the balanced-smaller-first orientation. Hence any subfamily of the tested orientations that omits this orientation fails the $k+2$ theorem on those cases.

The main limitation is the restriction to $|F|\le2$. The component-repair theorem itself is more general: for any fixed fault set and any selected orientation whose healthy component graph is connected, MOEM repairs the components using the minimum possible $c-1$ external crossing edges. Since the oriented broadcast trees have degree at most four, deleting $q$ non-source vertices creates at most $1+3q$ components, so the component-repair cost remains $O(q)$. What remains open for $q\ge3$ is the depth-certificate theorem. Three collinear faults should be handled by nested axis-interval certificates; two off-axis faults plus one axis fault combine the O3 crossing with an axis-tail certificate unless the axis fault blocks the side entry. A natural conjecture is that a finite extension of the O3/O6 certificate library gives depth $k+O(1)$ for three faults while preserving $c-1$ repair optimality.

As a concrete partial three-fault certificate, consider three collinear faults on the positive $x$-axis, $F=\{(a,0),(b,0),(c,0)\}$ with $1\le a<b<c\le k$. In the transverse sign-preference orientation, the lower-side entries remain healthy, and the deleted vertices create at most two bounded axis intervals and one outer tail. A bounded interval cut after layer $p$ and ending before layer $q$ can be attached through $((p+1,-1),(p+1,0))$ with certificate value at most $(p+2)+1+(q-p-2)=q+1\le k+1$. The outer tail beyond $c$ is attached through $((c+1,-1),(c+1,0))$ with value $(c+2)+1+(k-c-1)=k+2$, with empty intervals ignored. Thus this representative three-fault family follows the same nested-interval logic as Lemma~\ref{lem:axis-pairs}; the open cases are those where non-collinear faults block the side entries used by such interval certificates.

Finally, the baseline choice is dictated by the objective. BFS rebuild and precomputed tree-diversity methods are useful resilience mechanisms, but they optimize connectivity or path diversity rather than the number of external component-crossing repair edges needed to preserve a damaged diameter-level tree. Fixed-orientation local repair is therefore the nearest same-objective baseline, and Fig.~\ref{fig:repair-count} shows the cost of omitting orientation selection.

\section{Complexity}

For each tested orientation, the fault-pruned forest can be computed in $O(N)$ time. Crossing edges can be found by scanning the four neighbors of each healthy vertex, also $O(N)$. For each tree component, all entry-vertex eccentricities can be computed in linear time in that component by the standard two-sweep tree-diameter method. Summed over all components, this preprocessing is $O(N)$ for one orientation. Since MOEM evaluates a constant number of orientations, the total running time remains $O(N)$ for the one- and two-fault setting, with a moderate constant factor.

\section{Conclusion}

This paper introduced MOEM, a multi-orientation edge-minimum repair method for non-redundant one-to-all broadcasting in dense Gaussian networks. MOEM evaluates a constant-size family of Gaussian broadcast-tree orientations, selects a fault-aware candidate, contracts the fault-pruned tree into healthy components, and reconnects those components using an externally edge-minimum set of repair edges. For a selected orientation with $c$ healthy components, exactly $c-1$ external component-crossing repair edges are necessary and sufficient.

For every one- or two-fault placement, the MOEM family contains an orientation admitting a repaired broadcast tree of depth at most $k+2$. The proof uses certificate preservation, coordinate shielding templates, an explicit four-case O3 proof for separated off-axis pairs, and a five-component O6 certificate for the orthogonal-axis hard case. Exhaustive validation over 62,768 cases for $k=5$ through $10$, together with large-scale validation through $k=200$, confirms the proof-derived behavior. The component-repair theorem extends to arbitrary fault counts when the healthy component graph is connected; the collinear three-fault example suggests how the certificate program can begin, while the main next step is to complete the depth-certificate classification for interacting non-collinear faults.

\section*{Acknowledgment}
The author thanks the Department of Computer Science, Faculty of Science, Kuwait University, for its support and research environment. This work did not receive a specific grant from any funding agency in the public, commercial, or not-for-profit sectors.

\end{document}